\DeclareMathOperator*{\Exp}{\mathbb{E}}
\newcommand{\nsReal}{\mathbb{R}}
\newcommand{\bidders}[1][]{N}
\newcommand{\numbidders}[1][]{n}
\newcommand{\goods}[1][]{G}
\newcommand{\numgoods}[1][]{m}
\newcommand{\alloc}[1][]{x_{#1}}
\newcommand{\allocvec}[1][]{\mathbf{\alloc}}
\newcommand{\intalloc}[1][]{{\hat{\alloc}}_{#1}}
\newcommand{\intallocvec}[1][]{\mathbf{\intalloc}}
\newcommand{\pertallocvec}[1][]{\tilde{\mathbf{\alloc}}}
\newcommand{\val}[1][]{v_{#1}}
\newcommand{\valvec}[1][]{\mathbf{\val}_{#1}}
\newcommand{\valw}[1][]{w_{#1}}
\newcommand{\valz}[1][]{z_{#1}}
\newcommand{\valspace}[1][]{T} 
\newcommand{\bid}[1][]{b_{#1}}
\newcommand{\bidvec}[1][]{\mathbf{\bid}_{#1}}
\newcommand{\payment}[1][]{p_{#1}}
\newcommand{\paymentvec}[1][]{\mathbf{\payment}_{#1}}
\newcommand{\intpayment}[1][]{\hat{p}_{#1}}
\newcommand{\intpaymentvec}[1][]{\hat{\mathbf{\payment}}_{#1}}
\newcommand{\util}[1][]{u_{#1}}
\newcommand{\virval}[1][]{\varphi_{#1}}
\newcommand{\costfun}[1][]{c_{#1}}
\newcommand{\costfunvec}[1][]{\mathbf{\costfun}_{#1}}
\newcommand{\intcostfun}[1][]{\hat{c}_{#1}}
\newcommand{\intcostfunvec}[1][]{\hat{\mathbf{\costfun}}_{#1}}
\newcommand{\quantile}[1][]{q_{#1}}
\newcommand{\PH}[1][]{%
\mathcal{PH}\ifthenelse{\equal{#1}{}}{}{\text{-}{#1}}%
} %
\newcommand{\MPH}[1][]{%
\mathcal{MPH}\ifthenelse{\equal{#1}{}}{}{\text{-}{#1}}%
} %
\newcommand{\mydef}[1]{\textbf{#1}}
\newcommand{\ignore}[1]{}
\begin{document}

\title[Simple vs Optimal Mechanisms in Auctions with Convex Payments]{Simple vs Optimal Mechanisms in Auctions with Convex Payments}  
\author{Amy Greenwald}
\affiliation{%
  \institution{Brown University}
  \department{Computer Science}
  \city{Providence}
  \state{RI}
  \country{USA}
}
\author{Takehiro Oyakawa}
\affiliation{%
  \institution{Brown University}
  \department{Computer Science}
  \city{Providence}
  \state{RI}
  \country{USA}
}
\author{Vasilis Syrgkanis}
\affiliation{%
  \institution{Microsoft Research}
  \department{New England}
  \city{Cambridge}
  \state{MA}
  \country{USA}
}

\begin{abstract}

We investigate approximately optimal mechanisms in settings where
bidders' utility functions are non-linear;
specifically, convex, with respect to payments (such settings arise, for instance, in procurement auctions for energy).
We provide constant factor
approximation guarantees for mechanisms that are independent
of bidders' private information (i.e., prior-free),
and for mechanisms that rely to an increasing extent on that information (i.e., detail free).
We also describe experiments, which show that for
randomly drawn monotone hazard rate distributions,
our mechanisms achieve at least 80\% of the optimal revenue,
on average.  Both our theoretical and experimental results
show that in the convex payment setting, it is desirable
to allocate across multiple bidders, rather than only to bidders
with the highest (virtual) value, as in the traditional
quasi-linear utility setting.

\end{abstract}

\maketitle

\section{Introduction}

The trade off between simplicity and optimality in mechanism design
has been well-explored over the past decade. A long line of work has
addressed whether simple mechanisms can achieve approximately optimal
performance in single-dimensional environments
\cite{Myerson:Optimal/1981,bulow1996auctions,Hartline2009,Hartline2015,Roughgarden2016}. Typically,
simplicity requirements take the form of prior-freeness: the mechanism
should be prior-free, meaning it should not depend on any
distributional knowledge about participants' private information; or,
it should be detail-free, meaning it should only minimally depend on
it.
Such \emph{oblivious\/} designs lead to more robust
guarantees, as they do not heavily depend on modeling assumptions, or
on data collection to learn about the participants' private information.

All of the aforementioned work on simple, optimal auctions assumes
that bidders' utilities are quasi-linear with respect to payments,
i.e., $\util[i] = \val[i] \alloc[i] - \payment[i]$, where
$\val[i] > 0$ is $i$'s private value, $\alloc[i]$ is his allocation,
and $\payment[i]$ is his payment to the auctioneer.
A notable exception is the work of \citet{Fu2013}, who consider prior-independent auctions for risk-averse agents,
who are modelled by a very specific form of capped quasi-linear utilities.

In this paper, we investigate the problem of simple, optimal auction
design, assuming utilities of the form $\util[i] = \val[i] \alloc[i] -
\costfun[i](\payment[i])$, where $\costfun[i](\cdot)$ is a convex
function that characterizes the impact of $\payment[i]$ on $i$'s
utility function, which we can interpret as $i$'s \emph{perceived\/}%
\footnote{While perceived payments may be hallucinated, for example by risk-averse bidders,
perceived payments refer, more generally, to any payments made to anyone other than the auctioneer.}
payment, as it is distinct from $\payment[i]$, $i$'s \emph{actual\/}
payment to the auctioneer.
We focus primarily on payments of the form
$c_i(p_i) = p_i^d$, for some fixed exponent $d \ge 2$.

Our original motivation for convex payments stemmed from a reverse
auction design problem in the realm of renewable energy markets.
Consider a procurement auction in which a government with a fixed
budget is offering subsidies (in euros, say) to power companies in
exchange for a supply of renewable energy (in watts, say).
Suppose the power companies' utility functions take this form:
$\mathfrak{\util}_{i} = \alloc[i] - \costfun[i](\payment[i]) / \val[i]$, 
where $\alloc[i]$ is some fraction of the total budget in euros, and
$\payment[i]$ is some deliverable amount of power in watts.%
\footnote{Multiplying $\mathfrak{\util}_{i}$ by $\val[i]$ 
yields a more familiar utility function---that of the forward auction setting: 
$\val[i] \mathfrak{\util}_{i} = \util[i] = \val[i] \alloc[i] - \costfun[i](\payment[i])$, 
with utility measured in units of power, rather than money.}  
The value $\val[i]$ can be understood as a production rate,
in that the company can produce $\val[i]$ watts per euro.
The assumption that $\costfun[i] (\cdot)$ is convex reflects
the fact that energy production costs may not be linear; 
on the contrary, it may be the case that as more energy is produced,
further units become more expensive to produce.

Another problem which also fits into our framework is the problem of allocating 
a fixed block of advertising time to retailers during the Superbowl.
In this application, an advertiser's utility is calculated by 
converting its allocation, in time, into dollars via its private value 
(measured in dollars per time unit), and then subtracting the cost of production: 
$\util[i] = \val[i] \alloc[i] - \costfun[i](\payment[i])$.
Here again, production costs can be convex; for example, the
advertiser might have to borrow funds to enable production, and in so
doing, might be subject to a non-linear (and publicly disclosed)
interest rate.

In both of these problems, the auctioneer seeks an optimal auction:
i.e., one that maximizes its total expected revenue.  Specifically, in
the energy problem, the government's objective is to maximize the
amount of power produced, subject to its budget constraint.  In the
advertising problem, the television network is seeking to maximize its
revenue for selling a fixed block of advertising time during the
Superbowl.  These examples motivate our goal of searching for simple,
optimal auctions, assuming convex payments.

\ignore{
\vsdelete{We further motivate the setting by pointing out that convex
payments can be used to offset any uncertainty a bidder might have
about its surplus ($v_i x_i$), in both the forward and the reverse
auction settings.  That is, if a bidder is uncertain about either its
private value, or about whether the auctioneer will indeed allocate as
promised, convex payments can be used to hedge against this
uncertainty.  Hallucinating costs that are greater than anticipated,
and planning accordingly, is one way to model risk aversion.}
}

The departure from quasi-linear utilities presents both technical and
qualitative differences.
As a first, the optimal mechanism does not appear to have such an
intuitive, closed-form characterization, but rather is the outcome of
an ugly convex program (see Section \ref{sec:optimal-auction}).
Beyond this technical difficulty,
the Myerson characterization~\cite{Myerson:Optimal/1981}, in which the
optimal auction in a symmetric enviornment with regular distributions
allocates to bidders with the highest values above some reserve, is no
longer valid.
As our next example shows,%
\footnote{We present a rough sketch of the proof here, deferring a detailed exposition to Appendix \ref{sec:example}.}
any mechanism that allocates only to the highest-value bidders cannot
achieve a constant factor approximation.

\begin{example} 
We show that, if we insist on allocating to only bidders of the highest value, the resulting mechanism can be very suboptimal, with a suboptimality ratio that decays to zero as the number of bidders $N$ approaches infinity, at a rate of $1/N^{1/4}$. Consider the following distribution of values: the value of each bidder is either $1$ with probability $\frac{\log(N)}{\sqrt{N}}$, or $1-\epsilon$ with probability $1-\frac{\log(N)}{\sqrt{N}}$. 
As the number of bidders grows large, then with very high probability there will be approximately $\sqrt{N}$ bidders with value $1$, and approximately $N-\sqrt{N}$ bidders with value $1-\epsilon$. So the interim allocation of a highest-bidders-win auction is approximately: $1/\sqrt{N}$ for a bidder with value $1$, and $0$ for a bidder with value $1-\epsilon$.
It follows that the payment of a bidder with value $1$ is
$\sqrt{xv} = \sqrt{x(1)} \approx 1/N^{1/4}$ (see Section~\ref{sec:payments}), while the payment of a bidder with value $1-\epsilon$ is approximately $0$. Thus, the expected payment of a single bidder is approximately $\left( \frac{1}{\sqrt{N}} \right) \left( \frac{1}{N^{1/4}} \right)$, and the total expected revenue is $N$ times this quantity, which is $N^{1/4}$.

On the other hand, if we instead allocate to all bidders uniformly at random (as long as they bid at least $1-\epsilon$), then we can charge everyone $\sqrt{xv} = \sqrt{\frac{1}{N} (1 - \epsilon)}$, leading to a total expected revenue of $\sqrt{N (1-\epsilon)}$. As $\epsilon\rightarrow 0$, the ratio of this latter mechanism to the former is $O(N^{1/4})$. 
\end{example}

This finding is not specific to this pathological example.  Experimentally, for many value distributions, we find that mechanisms that allocate only to bidders with the highest value, although they perform well in quasi-linear settings, can perform very poorly in the convex payment setting.

\paragraph{Contributions} Despite these technical and qualitative difficulties, we show that simple mechanisms, which are prior-free or detail-free, can achieve very good worst-case approximation ratios, relative to the optimal revenue. We also experiment with these simple mechanisms, as well as related mechanisms that were not so amenable to analysis, and find that they perform very close to optimally for a wide variety of distributions on values. 

More concretely, our main theoretical results hold when payment functions take the form $c_i(p_i) = p_i^d$, for $d \geq 2$, and when the distribution of values satisfies the Monotone Hazard Rate condition. For this setting, these results can be summarized as follows:

\begin{itemize}
\item We characterize an upper bound on the revenue of the optimal auction by finding a closed-form solution to a convex program that upper bounds the optimal revenue.
  
\item We show that a mechanism which sets a reserve price by drawing randomly from the distribution of values, and then allocates uniformly to all bidders above this reserve, is a constant factor approximation to the optimal mechanism.
This result implies a prior-free, constant factor approximation auction: randomly pick one bidder, and then allocate uniformly to all bidders who values fall above that of this price-setting bidder.
  
\item If the auctioneer knows the median of the value distribution, then a better approximation guarantee can be achieved by setting the reserve to be this median. Further optimization of the reserve price, tailored to the specific form of the convex payment function, \emph{assuming it is known}, leads to even better approximation ratios.
  
\item Even if the auctioneer does not know the exact form of the payment functions, but as long as they are of the form $p_i^d$, for some $d \geq 2$ which is known to the bidders, then a simple all-pay auction which allocates the good uniformly to the top quarter of the bidders, yields a constant factor approximation. This auction, while not truthful, is independent of the distribution of values and of the exponent $d$ in the payment function.  
\end{itemize}

In addition, we experiment with the performance of the aforementioned proposed auctions, as well as other auctions that look qualitatively similar. In our experiments, we draw from random distributions of values, and then we compute the revenue of each of auction, as well as the optimal revenue, for up to $30$ bidders. As a necessary step, we also show how one can formulate the optimal revenue computation as a convex program with the number of variables equal to the support of the distribution of values (for finite support distributions). We find that auctions that are worst-case constant factor approximations achieve on average at least 80\% of optimal, assuming more than $4$ bidders, thereby achieving much better guarantees experimentally than in theory.

\paragraph{Related Work}

\citet{Vickrey:Counterspeculation/1961} showed that 
auctions in which the highest bidder wins and pays the second-highest bid 
incentivize bidders to bid truthfully.
\citet{Myerson:Optimal/1981} showed that in the
single-parameter setting, with the usual quasi-linear utility function
involving linear payments, total expected revenue is maximized by a
Vickrey auction with reserve prices.  Our setting is not captured by
Myerson's classic characterization because perceived payments in our
model are not equivalent to actual payments.

\citet{bulow1996auctions}, \citet{Hartline2009}, and \citet{Roughgarden2016} study simple prior independent mechanisms. The results in \citet{bulow1996auctions} show that running a simple second-price auction is a $(n-1)/n$ approximation in symmetric settings with quasi-linear utilities. \citet{Hartline2009} extend this analysis to obtain similar results in asymmetric settings. The result of \citet{bulow1996auctions} can also be phrased as obtaining a constant factor approximation by running a second-price auction with a reserve drawn randomly from the distribution of values. Our constant factor prior-independent result stems from this intuition. However, as we show in the introduction, allocating only to the highest bidders can be very suboptimal. Hence, we modify our mechanism to allocate uniformly at random to all bidders that surpass the reserve. We show that this simple modification is crucial to obtaining constant factor approximations in the convex payment setting.

The technical difficulties that arise in our setting are similar in
spirit to the ones faced by \cite{Pai2014} when designing
optimal auctions for budget-constrained bidders.
If $\costfun[i](\payment[i]) = \payment[i]^k$ for some $k \gg 0$,
then $\util[i] = \val[i] \alloc[i] - \payment[i]^k$ quickly approaches $-\infty$
if $\payment[i] > \left( \val[i] \alloc[i] \right)^{1/k}$.  
Thus, we can interpret a utility
function with convex perceived payments as a continuous approximation of that of
a budget-limited agent whose utility is $-\infty$ whenever her payment
exceeds her budget.

Our model also has strong connections with the literature on optimal auctions for risk-averse buyers (see \cite{Maskin1984}), since a concave utility function can be interpreted as a form of risk aversion.

Translating a reverse auction into a direct auction by
multiplying utility by the private parameter $\val[i]$ was previously
proposed in the literature on optimal contests 
(see, for example, \cite{Chawla2012,Dipalantino2009}).

Procuring services subject to a budget constraint is also the subject
of the literature on budget-feasible mechanisms initiated by
\cite{Singer2014}.  However, in this literature, the service of
each seller is fixed and the utility of the buyer is a combinatorial
function of the set of sellers the buyer picks.  In our setting, each
seller can produce a different level of service 
by incurring a different cost, so the buyer picks not only a
set of sellers, but a level of service that each seller should provide
as well.  This renders the two models incomparable.

Settings where bidders' utilities decrease 
at least as quickly as payments increase 
have been studied by \cite{sakurai2008beyond}, in the context of strategy-proof environments.

Our model is also related to the parameterized supply bidding game of \cite{Johari2011}, 
where firms play a game in which they submit a single-parameter family of supply functions. 
There, a non-truthful mechanism decides how much each firm is asked to produce.
Here, we consider the design of truthful mechanisms and we study a different objective, 
but we also restrict attention to single-parameter families of supply functions $\costfun[i](\payment[i])$.

Optimal mechanism design with non-linear preferences (mostly budget constraints) 
has been analyzed by several papers 
and we refer the reader to Chapter 8 of \cite{Hartline2015} 
for a recent survey on the state of the art. 
For the case of non-linear preferences that we analyze, 
no closed form solution of the optimal mechanism is known and 
thereby deriving intuitive approximations seems like a good alternative.  
In principle, some formulations of our problem 
can be solved using Border's characterization of interim feasible outcomes \cite{Border1991} 
and an ellipsoid-style algorithm with a separation oracle. 
Even more generally, we can apply the algorithmic approach of \cite{Cai2013} 
for computing the optimal mechanism, which again is based on an ellipsoid-style algorithm. 
However, such mechanisms tend to be computationally expensive and 
do not yield closed form characterizations of good mechanisms. 
Here, we seek fast allocation heuristics with potential economic justification, 
such as virtual-value-based maximization. 
Virtual-value-maximizing approximations to optimal auction design were also studied 
recently by \cite{alaei2013simple} in the context of multi-dimensional mechanism design, 
and from a worst-case point of view. 

\section{Model and Preliminaries}

There is one seller who would like to sell one unit-sized divisible good,
and there are $\numbidders$ bidders that would like to buy as much of it as possible. Each bidder $i \in \bidders = \{ 1, \dots, \numbidders \}$ 
has a private value
for the good in its entirety.
Each value $\val[i]$ is drawn independently from an atomless distribution $F$, with continuous probability density $f$ 
that is strictly positive on the support, which is the closed interval $\valspace[i] = [0, \bar{\val}]$.
Let $\valvec = \left( \val[1], \dots, \val[\numbidders] \right) \in \valspace^n$ be a 
sample value vector, drawn from distribution $F^n$.

Given a vector of reports 
$\bidvec = \left( \bid[1], \dots, \bid[\numbidders] \right) \in \nsReal^{\numbidders}$,
with $\bid[i] \in \valspace[i]$, for all $i \in \bidders$,
a mechanism produces an allocation rule
$\allocvec ( \bidvec ) \in [0, 1]^{\numbidders}$ 
together with a payment rule
$\paymentvec ( \bidvec ) \in \nsReal_{\ge 0}$,
where bidder $i$'s (actual) payment to the seller is $\payment[i] (\bidvec)$.
For vectors such as $\bidvec$, 
we use the notation $\bidvec = (\bid[i], \bidvec[-i])$ 
to emphasize the distinction between bidder $i$'s role in the auction,
and all other bidders $\bidders \setminus \{ i \}$.

Let the utility of each bidder $i$ be 
\begin{equation}
\label{eq:utilityFunction}
\util[i] (\bid[i], \bidvec[-i]) 
=
\val[i] \alloc[i] (\bid[i], \bidvec[-i])
-
\costfun[i] \left( \payment[i] (\bid[i], \bidvec[-i]) \right)
,\end{equation}
where $\costfun[i] : \nsReal_{\ge 0} \rightarrow \nsReal_{\ge 0}$ is a 
convex payment function.
For readability, we often write
$\costfun[i] (\bid[i], \bidvec[-i])$
instead of 
$\costfun[i] \left( \payment[i] (\bid[i], \bidvec[-i]) \right)$.
This is bidder $i$'s \emph{perceived\/} payment,
beyond his actual payment $\payment[i] (\bidvec)$,
which only includes payments made to the auctioneer.
Similar to the payment rule, we refer to 
$\costfunvec (\bidvec) \in \nsReal_{\ge 0}^{\numbidders}$,
comprised of variables 
$\costfun[i] (\bid[i], \bidvec[-i])$,
as the perceived payment rule.

\subsection{Constraints}

Next, we formalize the constraints we impose on an optimal auction design.
Because we restrict our attention to incentive compatible auctions,
where it is optimal to bid truthfully, we write, for example, 
$\costfun[i] ( \val[i], \valvec[-i] )$
instead of $\costfun[i] ( \bid[i], \bidvec[-i] )$.

A mechanism is called \mydef{incentive compatible} (IC) if each bidder maximizes her utility
by reporting truthfully (i.e., $\bid[i] = \val[i]$):
$\forall i \in \bidders$,
$\forall \val[i], \valw[i] \in \valspace[i]$,
and $\forall \valvec[-i] \in \valspace[-i]$,
$\val[i] \alloc[i] ( \val[i], \valvec[-i] ) - \costfun[i] ( \val[i], \valvec[-i] )
\ge  
\val[i] \alloc[i] ( \valw[i], \valvec[-i] ) - \costfun[i] ( \valw[i], \valvec[-i] )$.
\mydef{Individual rationality} (IR) ensures that bidders have non-negative utilities:
$\forall i \in \bidders$, $\forall \val[i] \in \valspace[i]$, and $\forall \valvec[-i] \in \valspace[-i]$,
$\val[i] \alloc[i] ( \val[i], \valvec[-i] ) - \costfun[i] ( \val[i], \valvec[-i] )
\ge 
0$.

Next, we define IC and IR in expectation (with respect to $F_{-i}$).
We introduce \mydef{interim allocation} and \mydef{interim perceived payment}
variables, respectively:
$\intalloc[i] ( \val[i] ) \equiv \intalloc[i] ( \val[i], \cdot ) =
\Exp_{\valvec[-i]} \left[ \alloc[i] ( \val[i], \valvec[-i] ) \right]$ and
$\intcostfun[i] ( \val[i] ) \equiv \intcostfun[i] ( \val[i], \cdot ) =
\Exp_{\valvec[-i]} \left[ \costfun[i] ( \val[i], \valvec[-i] ) \right]$.
These variables comprise the interim allocation
and perceived payment rules,
$\intallocvec (\valvec) \in [0,1]^{\numbidders}$ and
$\intcostfunvec  (\valvec) \in \mathbb{R}_{\ge 0}^{\numbidders}$.

We call a mechanism \mydef{Bayesian incentive compatible} (BIC) if utility is maximized by
truthful reports in expectation:
$\forall i \in \bidders$ and $\forall \val[i], \valw[i] \in \valspace[i]$,
$\val[i] \intalloc[i] ( \val[i] ) - \intcostfun[i] ( \val[i] )
\ge  
\val[i] \intalloc[i] ( \valw[i] ) - \intcostfun[i] ( \valw[i] )$.
\mydef{Interim individual rationality} (IIR) insists on non-negative utilities in expectation:
$\forall i \in \bidders$ and $\forall \val[i] \in \valspace[i]$,
$\val[i] \intalloc[i] ( \val[i] ) - \intcostfun[i] ( \val[i] )
\ge 
0$.

We say a mechanism is \mydef{ex-post feasible} (XP) if it never overallocates:
$\forall \valvec \in \valspace^n$,
$\sum_{i=1}^{\numbidders} \alloc[i] ( \val[i], \valvec[-i] ) 
\le 
1$.
Finally, we require that
$0 
\le 
\alloc[i] ( \val[i], \valvec[-i] ), \intalloc[i] ( \val[i] ) 
\le 1$,
$\forall i \in \bidders$,
$\forall \val[i] \in \valspace[i]$ and $\forall \valvec[-i] \in \valspace^{n-1}$.

The goal of the auctioneer is to maximize \mydef{expected revenue}, which is equal to: $\Exp_{\valvec}\left[\sum_{i\in N} p_i(\valvec)\right]$. If we define with $\intpayment[i] (\val[i])=\Exp_{\valvec[-i]}\left[p_i(\val[i],\valvec[-i])\right]$, then the goal of the auctioneer is to optimize $\sum_{i\in N}\Exp_{\val[i]}\left[\intpayment[i] (\val[i])\right]$.

\subsection{Distributions and Properties}

We introduce some useful notation and terminolgy with respect to properties of the distribution $F$. For any distribution $F$, let $\quantile(\val) = 1 - F(\val)$, be the quantile function, and let $\val(\quantile) = \quantile^{-1}(\cdot)$, be the inverse quantile function. The quantile of a value $\val$ is the probability that a random draw from distribution $F$ exceeds $\val$.
Observe that
quantiles are distributed uniformly on $[0, 1]$.

In the usual quasi-linear setting, $R(\quantile) = \val(\quantile) \quantile = \val(\quantile) (1 - F(\val(\quantile))$ is the \mydef{revenue} function, assuming distribution $F$.
Intuitively, $R(q)$ is the expected revenue of a seller who posts a reserve price of $\val(\quantile)$ (i.e., one that is surpassed with probability $\quantile$).
In the convex payment setting, this function is the \mydef{perceived revenue} function.
Since, $F$ is atomless with support $[0,\bar{\val}]$, $R(0) = R(1) = 0$.

Finally, let $q^* \in \arg\max_{q\in [0,1]} R(q)$ be the quantile corresponding to optimal revenue, which we will refer to the \mydef{monopoly quantile}. Further, let $\eta = v(q^*)$ be the posted price corresponding to the monopoly quantile, which we will refer to as the \mydef{monopoly reserve}. We will also denote by $\kappa = v(1/2)$ the median of the distribution, and by $\mu = E_{v\sim F}[v]$, the mean.

We will be looking at two standard classes of distributions. The smaller class is that of monotone hazard rate (MHR) distributions, which require that $h(\val) = f(\val) / (1 - F(\val))$ be monotone non-decreasing. The larger class is that of regular distributions, which require that $R(q)$ be a concave function, or equivalently $\phi(q) = R'(q)= v(q)-\frac{1-F(v(q))}{f(v(q))}$, be monotone decreasing. Since $\phi(q)=v(q) - \frac{1}{h(v(q))}$, it is easy to see that an MHR distribution is also regular.

We now state two lemmas describing bounds on revenue curves, 
depending on the assumptions made on the 
distributions values are drawn from.

\begin{lemma}[\cite{dhangwatnotai2015revenue}]
\label{lem:dhangwatnotai}
For any MHR distribution,
\begin{equation}
R(q^*) \ge \frac{\mu}{e}
.\end{equation}
\end{lemma}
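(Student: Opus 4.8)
The plan is to encode the MHR hypothesis as convexity of the cumulative hazard, to locate the monopoly reserve via a first-order condition, to derive a sub-exponential upper bound on the survival function \emph{anchored at that reserve}, and then to integrate to bound the mean.

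First I would write $q(v) = 1 - F(v) = e^{-H(v)}$, where $H(v) = \int_0^v h(t)\,dt$ is the cumulative hazard; since $F(0) = 0$ we have $H(0) = 0$, and since $h$ is non-decreasing, $H$ is convex. Viewed as a function of the posted price, $R$ is $v \mapsto v\,q(v) = v\,e^{-H(v)}$, which vanishes at $v = 0$ and at $v = \bar{v}$ and is positive in between, so the monopoly reserve $\eta = v(q^*)$ is interior, where $\frac{d}{dv}\big[v\,e^{-H(v)}\big] = e^{-H(v)}(1 - v\,h(v))$ vanishes; as $e^{-H(v)} > 0$, this gives $\eta\,h(\eta) = 1$. (Monotonicity of $v \mapsto v\,h(v)$ also shows $\eta$ is a maximizer.) Two consequences I would record immediately. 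By monotonicity of $h$, $H(\eta) = \int_0^\eta h(t)\,dt \le \eta\,h(\eta) = 1$, hence $e^{-1} \le q(\eta) \le 1$. And since a convex function lies above its tangent line at $\eta$, $H(v) \ge H(\eta) + h(\eta)(v-\eta) = H(\eta) - 1 + v/\eta$ for all $v \ge 0$, which exponentiates to
\[
q(v) \;\le\; e^{1 - H(\eta)}\,e^{-v/\eta} \;=\; e\,q(\eta)\,e^{-v/\eta} \qquad \text{for all } v \ge 0 .
\]

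Next, set $R^* = R(q^*) = \eta\,q(\eta)$ and $\beta = e\,q(\eta) = eR^*/\eta$; by the bounds on $q(\eta)$, $\beta \in [1,e]$. Combining the displayed bound with $q \le 1$ gives $q(v) \le \min\{1,\ \beta e^{-v/\eta}\}$, so, using the mean identity $\mu = \int_0^\infty q(v)\,dv$ and splitting at the crossover $v_0 = \eta\ln\beta$,
\[
\mu \;\le\; \int_0^{v_0} 1\,dv \;+\; \int_{v_0}^\infty \beta e^{-v/\eta}\,dv \;=\; \eta\ln\beta + \eta \;=\; \eta(1 + \ln\beta) .
\]
Writing $\ln\beta = 1 - \ln(\eta/R^*)$ and putting $x = \eta/R^* = 1/q(\eta) \in [1,e]$, this reads $\mu \le R^*\,x\,(2 - \ln x)$. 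Since $g(x) = x(2 - \ln x)$ has $g'(x) = 1 - \ln x \ge 0$ on $[1,e]$, we get $g(x) \le g(e) = e$, and hence $\mu \le e\,R^* = e\,R(q^*)$.

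The step I expect to be the real obstacle is pinning the constant down to exactly $1/e$. Cruder versions of the argument --- combining $q \le 1$ only with $q(v) \le R^*/v$, or anchoring the exponential tail bound at the quantile-$1/e$ point instead of at $\eta$ --- give only $R(q^*) \ge \mu/3$ or $R(q^*) \ge \mu/(e+1)$. The improvement comes from anchoring the tail bound at the monopoly reserve itself, so that the first-order condition $\eta\,h(\eta) = 1$ is used twice (to bound $H(\eta)$ and to fix the exponential rate), and from keeping $x = \eta/R^*$ as a free parameter all the way down to a one-variable optimization rather than substituting its extreme value prematurely. The bound is tight: for an exponential distribution (constant hazard rate) every inequality above is an equality and $\mu = e\,R(q^*)$.
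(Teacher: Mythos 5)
Your proof is correct, and it is the standard argument from the literature for this bound (convexity of the cumulative hazard $H$, the first-order condition $\eta\,h(\eta) = 1$ at the monopoly reserve, the tangent-line exponential tail bound $q(v)\le e\,q(\eta)e^{-v/\eta}$, and integration of the survival function to bound the mean, followed by a one-variable optimization over $q(\eta)\in[1/e,1]$). The paper itself does not prove this lemma --- it simply cites \cite{dhangwatnotai2015revenue} --- so there is no internal argument to compare against; your write-up supplies a complete, self-contained proof consistent with the cited source, and the tightness observation (equality for the exponential distribution) is a nice check. One small point you glossed over: you should confirm $v\,h(v)$ actually crosses $1$ somewhere on the support so that $\eta$ is well-defined and interior; this holds here because the support is bounded, hence $h(v)\to\infty$ as $v\to\bar{v}$, while $v\,h(v)\to 0$ as $v\to 0$.
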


\begin{lemma}[\cite{samuel1984comparison}]
\label{lem:samuel}
For any regular distribution, $R(q^*) \le \kappa$.
\end{lemma}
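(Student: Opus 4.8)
The plan is to reduce the claim to a short case analysis on the position of the monopoly quantile $q^*$ relative to $1/2$. The ingredients are all elementary: the inverse quantile function $v(\cdot)$ is non-increasing (immediate, since $v(q)$ inverts the non-increasing map $v \mapsto 1-F(v)$); $R(1)=0$, which is already recorded in the text since $F$ is atomless with support $[0,\bar v]$; and, crucially, $R$ is concave on $[0,1]$, which is exactly the definition of regularity. If $R\equiv 0$ the statement is trivial, so assume $R(q^*)>0$, hence $q^*\in(0,1)$.

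\textbf{Case $q^* \ge 1/2$.} Here monotonicity of $v$ gives $v(q^*)\le v(1/2)=\kappa$, and since $q^*\le 1$ we get $R(q^*)=q^* v(q^*)\le v(q^*)\le \kappa$. (This case uses nothing beyond monotonicity of $F$.) \textbf{Case $q^* < 1/2$.} Write $1/2$ as a convex combination of $q^*$ and $1$: $\tfrac12 = \alpha q^* + (1-\alpha)\cdot 1$ with $\alpha = \tfrac{1}{2(1-q^*)}\in[\tfrac12,1)$. Concavity of $R$ together with $R(1)=0$ yields $R(\tfrac12)\ge \alpha R(q^*) + (1-\alpha)R(1) = \alpha R(q^*)$, so $R(q^*)\le R(\tfrac12)/\alpha = 2(1-q^*)R(\tfrac12)\le 2R(\tfrac12)$. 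Since $R(\tfrac12)=\tfrac12 v(\tfrac12)=\kappa/2$, combining gives $R(q^*)\le\kappa$, completing the proof.

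\textbf{Main obstacle.} I do not expect a genuine obstacle: the argument is two lines and each step is routine. The only points requiring a moment's care are (i) that the two cases meet consistently at $q^*=1/2$ (both reduce to $R(1/2)=\kappa/2\le\kappa$), and (ii) verifying $\alpha\le 1$ in the second case, which is exactly what $q^*<1/2$ buys, so that the convex-combination step is legitimate. One could alternatively phrase the second case via the observation that $q\mapsto R(q)/(1-q)$ is non-decreasing whenever $R$ is concave with $R(1)=0$, so that $R(q)/(1-q)\le R(1/2)/(1/2)=\kappa$ for every $q\le 1/2$ and in particular $R(q^*)\le(1-q^*)\kappa\le\kappa$; but the explicit convex combination above is the cleanest self-contained route.
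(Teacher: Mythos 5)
Your proof is correct and follows the same basic idea as the paper's one-line argument, namely bounding $R(q^*)$ by $2R(1/2)=\kappa$ via concavity of the revenue curve. The paper simply asserts $R(q^*)/2 \le R(1/2)$ without spelling out the convex-combination step; your case split (together with the more elementary monotonicity-of-$v$ argument for $q^* \ge 1/2$) supplies exactly the justification the paper leaves implicit.
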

\begin{proof}
\begin{equation}
\frac{R (q^*)}{2} 
\le 
R \left( \frac{1}{2} \right)
= 
v \left( \frac{1}{2} \right) F \left( 1 - \val \left( \frac{1}{2} \right) \right)
= 
\frac{\kappa}{2}
.\end{equation}
\end{proof}

\subsection{Allocation, Payments, and Revenue}
\label{sec:payments}

Myerson showed that for a mechanism to satisfy BIC, IIR and XP, four conditions needed to hold.
Among them, he showed that there is a specific payment formula that must be used.
We restate his result below, adapted to the convex payment setting.
For simplicity, we assume $u(0) = 0$, as will be the case for all our mechanisms.
Although we state this theorem for the BIC/IIR/XP setting, 
we note that it is easily extendable to the (robust) IC/IR/XP setting.


\begin{theorem}[\cite{Myerson:Optimal/1981}]
\label{thm:myersonIcIrPaymentFormulaMonotone}
Assuming a convex payment function,
a mechanism is BIC, IIR and XP if and only if the following conditions hold:
\begin{itemize}
\item The allocation rule is monotone:
\begin{equation}
\intalloc[i] (\val[i]) \ge \intalloc[i] (\valw[i]),
\quad
\forall i \in \bidders,
\forall \val[i] \ge \valw[i] \in \valspace[i]
,\end{equation}
\item (Perceived) payments satisfy the following condition:
\begin{equation}
\val[i] \intalloc[i] (\val[i]) - \intcostfun[i] (\val[i])
=\int_{0}^{\val[i]} \intalloc[i] (\valz[i]) \, \mathrm{d}\valz[i],
\quad
\forall i \in \bidders,
\forall \val[i] \in \valspace[i]
,\end{equation}
\end{itemize}
\end{theorem}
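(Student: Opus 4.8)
The plan is to adapt Myerson's classical argument, using the single observation that the interim perceived payment $\intcostfun[i]$ plays exactly the role that the interim payment plays in the usual quasi-linear analysis: the interim utility of bidder $i$ with value $\val[i]$ who reports $\valw[i]$ is $\val[i]\intalloc[i](\valw[i]) - \intcostfun[i](\valw[i])$, an affine function of $\val[i]$ with slope $\intalloc[i](\valw[i])$. Write $U_i(\val[i]) = \val[i]\intalloc[i](\val[i]) - \intcostfun[i](\val[i])$ for the interim utility under truthful reporting. Note that the shape of $\costfun[i]$ never enters the argument below; only the non-negativity $\costfun[i]\ge 0$, together with the convention $u(0)=0$, is used, and only to fix one integration constant.

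For the ``only if'' direction, I would start from BIC: for all $\val[i],\valw[i]\in\valspace[i]$ we have both $U_i(\val[i]) \ge \val[i]\intalloc[i](\valw[i]) - \intcostfun[i](\valw[i])$ and $U_i(\valw[i]) \ge \valw[i]\intalloc[i](\val[i]) - \intcostfun[i](\val[i])$. Substituting $\intcostfun[i](\valw[i]) = \valw[i]\intalloc[i](\valw[i]) - U_i(\valw[i])$ into the first inequality (and symmetrically into the second) yields the sandwich
\[
(\val[i]-\valw[i])\,\intalloc[i](\valw[i]) \;\le\; U_i(\val[i]) - U_i(\valw[i]) \;\le\; (\val[i]-\valw[i])\,\intalloc[i](\val[i]).
\]
Taking $\val[i]>\valw[i]$ and dividing by $\val[i]-\valw[i]>0$ gives monotonicity of $\intalloc[i]$. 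The same sandwich shows $U_i$ is convex (it is the pointwise supremum over $\valw[i]$ of affine functions of $\val[i]$), hence locally Lipschitz and differentiable off a countable set; at each differentiability point the one-sided difference quotients are squeezed between $\intalloc[i](\valw[i])$ and $\intalloc[i](\val[i])$, so by monotonicity $U_i'(\val[i]) = \intalloc[i](\val[i])$ wherever $\intalloc[i]$ is continuous, i.e.\ almost everywhere. Integrating the absolutely continuous function $U_i$ gives $U_i(\val[i]) = U_i(0) + \int_0^{\val[i]}\intalloc[i](\valz[i])\,\mathrm{d}\valz[i]$. Finally $U_i(0) = -\intcostfun[i](0) \le 0$ since perceived payments are non-negative, while IIR forces $U_i(0)\ge 0$; hence $U_i(0)=0$, which is exactly the stated payment identity (and is immediate under the convention $u(0)=0$). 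The feasibility constraint XP is simply retained on the ex-post allocation rule throughout.

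For the ``if'' direction, I would assume $\intalloc[i]$ is monotone and the payment identity holds, so $U_i(\val[i]) = \int_0^{\val[i]}\intalloc[i](\valz[i])\,\mathrm{d}\valz[i]$. IIR is immediate because $\intalloc[i]\ge 0$ makes this non-negative. For BIC, rewrite $\intcostfun[i](\valw[i]) = \valw[i]\intalloc[i](\valw[i]) - \int_0^{\valw[i]}\intalloc[i](\valz[i])\,\mathrm{d}\valz[i]$, so the interim utility of value $\val[i]$ from reporting $\valw[i]$ equals $(\val[i]-\valw[i])\intalloc[i](\valw[i]) + \int_0^{\valw[i]}\intalloc[i](\valz[i])\,\mathrm{d}\valz[i]$; truthfulness is therefore equivalent to $\int_{\valw[i]}^{\val[i]}\intalloc[i](\valz[i])\,\mathrm{d}\valz[i] \ge (\val[i]-\valw[i])\intalloc[i](\valw[i])$, which follows at once from monotonicity of $\intalloc[i]$, checking the cases $\val[i]>\valw[i]$ and $\val[i]<\valw[i]$ separately to handle the orientation of the interval. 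The only non-routine point is the measure-theoretic step in the ``only if'' direction---recovering $\intalloc[i]$ as the a.e.\ derivative of the convex function $U_i$ and writing $U_i$ as the integral of that derivative---which is standard but is the one place where monotonicity of $\intalloc[i]$ is genuinely invoked rather than merely carried along.
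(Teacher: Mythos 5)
The paper does not prove this theorem; it cites Myerson (1981) and presents the statement as an adaptation to the convex-payment setting, without argument. Your proof correctly supplies the standard envelope-theorem derivation, and your key observation — that the interim perceived payment $\intcostfun[i]$ plays exactly the role of the interim payment in the quasi-linear analysis, so that convexity of $\costfun[i]$ never actually enters and only non-negativity plus the normalization $U_i(0)=0$ is used — is precisely why the paper can invoke Myerson verbatim. The sandwich inequality, the convexity of $U_i$ as a supremum of affine maps, the a.e.\ differentiation and integration step, and the separate verification of the ``if'' direction via the monotonicity integral bound are all correct and constitute the intended argument. The one point worth flagging is the role of XP: neither monotonicity nor the payment identity implies ex-post feasibility, so the ``iff'' as written is slightly loose (XP must be imposed on both sides); your remark that XP is ``simply retained on the ex-post allocation rule throughout'' is the right reading and matches how the paper uses the theorem.
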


Myerson showed that the total expected revenue of such a mechanism
can be described using virtual values.
We restate his findings, adapted to the convex payment setting.

Define the \mydef{virtual value} function as follows:
\begin{equation}
\virval[i] (\val[i])
=
\val[i]
-
\frac{1 - F_i (\val[i])}{f_i (\val[i])} = R'(q(v_i))
.\end{equation}

\begin{theorem}[\cite{Myerson:Optimal/1981}]
\label{thm:myersonVirVal}
Assuming a convex payment function,
the total expected perceived payment of a BIC, IIR, and XP mechanism is:
\begin{equation}
\sum_{i \in \bidders} \Exp_{\val[i] \sim F_i} \left[ \intcostfun[i] (\val[i]) \right]
=
\sum_{i \in \bidders} \Exp_{\val[i] \sim F_i} \left[ \virval[i] (\val[i]) \intalloc[i] (\val[i]) \right] \, =
\sum_{i \in \bidders} \Exp_{\val[i] \sim F_i} \left[ R'(\quantile(\val[i]))\cdot \intalloc[i] (\val[i]) \right]
\end{equation}
\end{theorem}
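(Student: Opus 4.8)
The plan is to derive the revenue identity from the payment characterization in Theorem~\ref{thm:myersonIcIrPaymentFormulaMonotone}, following the standard Myerson integration-by-parts argument but bookkeeping carefully that here the left-hand side of the payment equation is the \emph{perceived} payment $\intcostfun[i](\val[i])$ rather than the actual payment. First I would fix a bidder $i$ and take expectations over $\val[i]\sim F_i$ on both sides of the payment condition, yielding
\begin{equation}
\Exp_{\val[i]}\bigl[\intcostfun[i](\val[i])\bigr]
=
\Exp_{\val[i]}\bigl[\val[i]\intalloc[i](\val[i])\bigr]
-
\Exp_{\val[i]}\left[\int_0^{\val[i]}\intalloc[i](\valz[i])\,\mathrm{d}\valz[i]\right].
\end{equation}
Then I would rewrite the last term by switching the order of integration (Fubini), writing $\Exp_{\val[i]}[\int_0^{\val[i]}\intalloc[i](\valz[i])\,\mathrm{d}\valz[i]] = \int_0^{\bar\val}\intalloc[i](\valz[i])(1-F_i(\valz[i]))\,\mathrm{d}\valz[i]$, and similarly express $\Exp_{\val[i]}[\val[i]\intalloc[i](\val[i])]$ as an integral against the density $f_i$. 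Combining the two integrals over a common variable gives $\int_0^{\bar\val}\intalloc[i](\val[i])\bigl(\val[i]f_i(\val[i])-(1-F_i(\val[i]))\bigr)\,\mathrm{d}\val[i] = \Exp_{\val[i]}\bigl[\virval[i](\val[i])\intalloc[i](\val[i])\bigr]$, using the definition of $\virval[i]$. Summing over $i\in\bidders$ yields the first equality.

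For the second equality, I would use the change of variables to quantile space: since $\quantile = 1-F_i(\val[i])$ is uniform on $[0,1]$ and $\virval[i](\val[i]) = R'(\quantile(\val[i]))$ by the definition of the virtual value (already noted in the excerpt, $\virval[i](\val[i]) = R'(q(v_i))$), we get $\Exp_{\val[i]\sim F_i}[\virval[i](\val[i])\intalloc[i](\val[i])] = \Exp_{\val[i]\sim F_i}[R'(\quantile(\val[i]))\intalloc[i](\val[i])]$ directly, essentially by substitution. This step is almost immediate given that the virtual value has already been identified with $R'(q)$ in the text.

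I do not expect a serious obstacle here; the only point requiring a little care is justifying the interchange of integration order, which is fine because $\intalloc[i]$ is bounded in $[0,1]$ and everything is integrable on the compact support $[0,\bar\val]$, and handling the boundary term $\val[i]\intalloc[i](\val[i])$ at $\val[i]=\bar\val$ when one does the integration by parts in the alternative (non-Fubini) route — using $R(0)=R(1)=0$ and $u(0)=0$ to kill boundary contributions. The main conceptual content is simply recognizing that Myerson's revenue-equivalence manipulation is agnostic to whether the quantity on the left is a true payment or a transformed one, so the identity carries over verbatim with $\intpayment[i]$ replaced by $\intcostfun[i]$.
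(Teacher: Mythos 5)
Your proof is correct and is exactly the standard Myerson integration-by-parts (Fubini) argument, which the paper invokes by citation without reproducing: the only modification needed is noticing that the payment characterization in Theorem~\ref{thm:myersonIcIrPaymentFormulaMonotone} pins down $\intcostfun[i]$ rather than $\intpayment[i]$, so the identity reads off for perceived payments. The second equality is indeed immediate from the paper's definition $\virval[i](\val[i]) = R'(\quantile(\val[i]))$.
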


In the traditional quasi-linear setting,
when the convex payment function is the identity function,
the fact that virtual surplus is equivalent to revenue tells us that in
order to maximize revenue, 
the good should be allocated to among bidders with the highest non-negative virtual values.

\ignore{
\vsdelete{\cite{bulow1989simple} showed this as well, using revenue curves.
The derivative of the revenue curve with respect to $\quantile[i]$ 
is the marginal revenue curve, 
and is equivalent to the virtual value function, $\virval[i]$,
so another way to interpret the mechanism Myerson proposed is to
allocate to the bidder with the highest positive marginal revenue.}
}

However, in the convex payment setting, the actual revenue of the auctioneer is not pinned down by Myerson's theorem; only perceived revenue is.
That is, Myerson's characterization only pins down the expected perceived payment of a bidder, conditional on that bidder's value,
i.e., $\intcostfun[i](\val[i]) = \Exp_{\valvec[-i]}\left[c_i(p_i(\val[i],\valvec[-i]))\right]$.
But the auctioneer is trying to maximize $\Exp_{\valvec}\left[\sum_i p_i(\valvec)\right]$,
which does not have any obvious interpretation using Myerson's characterization of perceived payments.

Nevertheless, we now prove a lemma that allows us to make such a connection.
We show that for all payment rules $p_i(\valvec)$ that might not be independent of $\valvec[-i]$,
there exists a corresponding payment rule $h_i(\val[i])$ that is independent of $\valvec[-i]$, and that ensures at least as much revenue.

Consider a (possibly randomized) auction, Auction $A$, where $\payment[i]^A
(\val[i], \valvec[-i], r)$ denotes bidder $i$'s payment in auction
$A$.  (Here, $r$ is the outcome of some randomization device.)  We
define another auction, Auction $B$, with payment rule
$\payment[i]^B (\val[i], \valvec[-i],r) = h_i (\val[i])$ for some
function $h_i(\val[i])$ that depends only on $\val[i]$.  More
specifically,
$h_i (\val[i])
= \costfun[i]^{-1} \left( \Exp_{\valvec[-i], r} \left[
\costfun[i] \left( \payment[i]^A (\val[i], \valvec[-i], r) \right) \right] \right)$.

\begin{lemma}
\label{lem:feasibleAllocAB}
\label{lem:revAB}
An arbitrary allocation $\allocvec \in [0,1]^{\numbidders}$,
together with the corresponding payment rule $\intpaymentvec^A$ or $\intpaymentvec^B$,
satisfies BIC, BIR, and XP for Auction $A$ if and only if
it satisfies BIC, BIR, and XP for Auction $B$. Moreover, Auction $B$'s total expected revenue is at least that of Auction $A$'s.
\end{lemma}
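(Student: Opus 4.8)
The plan is to reduce the incentive and feasibility conditions to interim quantities, apply Myerson's characterization (Theorem~\ref{thm:myersonIcIrPaymentFormulaMonotone}), and then get the revenue comparison from Jensen's inequality. The starting observation is that $h_i$ is defined exactly so that Auction~$B$ has the same interim perceived payment rule as Auction~$A$: since $\payment[i]^B(\val[i], \mathbf{v}_{-i}, r) = h_i(\val[i])$ is deterministic in $\mathbf{v}_{-i}$ and $r$, we have
\[
\intcostfun[i]^B(\val[i]) = \costfun[i]\!\left( h_i(\val[i]) \right) = \Exp_{\mathbf{v}_{-i}, r}\!\left[ \costfun[i]\!\left( \payment[i]^A(\val[i], \mathbf{v}_{-i}, r) \right) \right] = \intcostfun[i]^A(\val[i])
\]
for every $i \in \bidders$ and $\val[i] \in \valspace[i]$, the middle equality being the definition of $h_i$. (This uses that $\costfun[i]$ is invertible; strict monotonicity of $\costfun[i]$, which holds for $\costfun[i](p) = p^d$, suffices and also guarantees $h_i(\val[i]) \ge 0$.) Moreover, the two auctions use the same allocation rule $\allocvec$, hence the same interim allocation rule $\intallocvec$, and ex-post feasibility is a property of $\allocvec$ alone.

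Next I would invoke Theorem~\ref{thm:myersonIcIrPaymentFormulaMonotone}, which says that BIC, BIR, and XP are equivalent to (i) monotonicity of each $\intalloc[i]$, and (ii) the payment identity $\val[i] \intalloc[i](\val[i]) - \intcostfun[i](\val[i]) = \int_0^{\val[i]} \intalloc[i](\valz[i]) \, \mathrm{d}\valz[i]$. Both (i) and (ii) are statements about the pair of interim rules $(\intallocvec, \intcostfunvec)$ only, and this pair coincides for the two auctions by the previous paragraph; hence the allocation $\allocvec$ with payment rule $\intpaymentvec^A$ is BIC/BIR/XP for Auction~$A$ if and only if $\allocvec$ with payment rule $\intpaymentvec^B$ is BIC/BIR/XP for Auction~$B$. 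The argument is symmetric in $A$ and $B$, so both directions are handled at once.

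For the revenue comparison, Auction~$B$'s expected revenue is $\sum_{i \in \bidders} \Exp_{\val[i]}\!\left[ h_i(\val[i]) \right]$ since its payments equal $h_i(\val[i])$. Fix $i$ and $\val[i]$, and let $P = \payment[i]^A(\val[i], \mathbf{v}_{-i}, r) \ge 0$ denote the (random) payment in Auction~$A$. Jensen's inequality for the convex function $\costfun[i]$ gives $\Exp_{\mathbf{v}_{-i}, r}[\costfun[i](P)] \ge \costfun[i]\!\left( \Exp_{\mathbf{v}_{-i}, r}[P] \right)$; applying the increasing map $\costfun[i]^{-1}$ to both sides yields $h_i(\val[i]) = \costfun[i]^{-1}\!\left( \Exp[\costfun[i](P)] \right) \ge \Exp_{\mathbf{v}_{-i}, r}[P] = \intpayment[i]^A(\val[i])$. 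Summing over $i$ and taking the expectation over $\val[i]$ shows Auction~$B$'s total expected revenue is at least Auction~$A$'s.

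The argument is mostly bookkeeping once the definition of $h_i$ is unpacked; the points I would be careful to flag are that $h_i$, hence Auction~$B$, is well defined only because $\costfun[i]$ is invertible, and that this same monotonicity is what makes the final application of $\costfun[i]^{-1}$ order-preserving. The conceptual heart is the ``Jensen-plus-derandomization'' step: convexity of the perceived payment function is exactly what lets us replace a randomized, $\mathbf{v}_{-i}$-dependent payment by a deterministic function of $\val[i]$ without losing revenue, and this is what makes Myerson's perceived-payment characterization usable for reasoning about actual revenue.
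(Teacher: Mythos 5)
Your proof is correct and follows essentially the same path as the paper: observe that $h_i$ is defined precisely to preserve the interim perceived payment rule, note that BIC/BIR/XP depend only on the interim allocation and interim perceived payment (and the allocation rule is unchanged), and then obtain the revenue inequality from Jensen's. Your Jensen step applies convexity to $\costfun[i]$ and then the increasing map $\costfun[i]^{-1}$, whereas the paper applies Jensen directly to the concave map $\costfun[i]^{-1}$; these are equivalent, and your route through Theorem~\ref{thm:myersonIcIrPaymentFormulaMonotone} is a valid (if slightly indirect) substitute for the paper's direct verification of the interim IC/IR inequalities.
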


\begin{proof}
We begin with the first part. Observe that
the proposed payment rule preserves interim bidder payments, and hence, utilities. More concretely: 
\begin{equation}
\intcostfun[i]^B (\val[i]) = \Exp_{\valvec[-i]}  \left[ \costfun[i] \left( \payment[i]^B (\val[i], \valvec[-i]) \right) \right] = 
 \costfun[i] \left( h_i (\val[i]) \right) =
  \costfun[i] \left(\costfun[i]^{-1} \left( \Exp_{\valvec[-i], r} \left[
    \costfun[i] \left( \payment[i]^A (\val[i], \valvec[-i], r) \right) \right] \right)\right)=\intcostfun[i]^A (\val[i]).
\end{equation}
Hence, for all bidders $i \in \bidders$ and values $\val[i], \valw[i] \in \valspace[i]$,
$\val[i] \intalloc[i] ( \val[i] ) - \intcostfun[i]^B (\val[i])
\ge
\val[i] \intalloc[i] ( \valw[i] ) - \intcostfun[i]^B (\valw[i])$
iff
$\val[i] \intalloc[i] ( \val[i] ) - \intcostfun[i]^A (\val[i])
\ge
\val[i] \intalloc[i] ( \valw[i] ) - \intcostfun[i]^A (\valw[i])$
and
$\val[i] \intalloc[i] ( \val[i] ) - \intcostfun[i]^B (\val[i]) \ge 0$
iff
$\val[i] \intalloc[i] ( \val[i] ) - \intcostfun[i]^A (\val[i]) \ge 0$.

Now we prove the second part. Since $\costfun[i](\cdot)$ is convex, 
by Jensen's inequality,
\begin{equation}
h_i (\val[i])
=
\costfun[i]^{-1} \left( \Exp_{\valvec[-i], r} \left[ \costfun[i] \left( \payment[i]^A (\val[i], \valvec[-i], r) \right) \right] \right)
\allowbreak
\geq 
\allowbreak
\Exp_{\valvec[-i], r} \allowbreak \left[ \costfun[i]^{-1} \left( \costfun[i] \left( \payment[i]^A (\val[i], \valvec[-i], r) \right) \right) \right]
=
\allowbreak
\Exp_{\valvec[-i], r} \allowbreak \left[ \payment[i]^A (\val[i], \valvec[-i], r) \right].
\end{equation}
In other words, payments in Auction $B$ can only exceed those of Auction $A$.
Therefore, the total expected revenue of Auction $B$ is at least that of Auction $A$.
\end{proof}

This lemma establishes that in our search for an optimal auction, it
suffices to restrict our attention to auctions like Auction $B$ in
which payments are (deterministic) function of value alone:

\begin{corollary}
\label{cor:myersonPayment}
Any revenue maximizing mechanism that is BIC, IIR and XP also satisfies:
\begin{equation}
\label{eqn:det-payment}
\intpayment[i] (\val[i]) = \costfun[i]^{-1} \left( \intcostfun[i] (\val[i])  \right) = \costfun[i]^{-1} \left(\val[i] \intalloc[i] (\val[i]) -\int_{0}^{\val[i]} \intalloc[i] (\valz[i]) \, \mathrm{d}\valz[i]\right),
\quad
\forall i \in \bidders,
\forall \val[i] \in \valspace[i]
.\end{equation}
\end{corollary}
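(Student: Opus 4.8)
The plan is to read the corollary off of Lemma~\ref{lem:revAB} together with the perceived-payment identity of Theorem~\ref{thm:myersonIcIrPaymentFormulaMonotone}. First I would take an arbitrary revenue-maximizing mechanism that is BIC, IIR, and XP and view it as an Auction $A$ with a (possibly randomized, possibly $\valvec[-i]$-dependent) payment rule $\payment[i]^A(\val[i],\valvec[-i],r)$. Lemma~\ref{lem:revAB} then furnishes the companion Auction $B$, whose payment rule $h_i(\val[i]) = \costfun[i]^{-1}\bigl(\Exp_{\valvec[-i],r}[\costfun[i](\payment[i]^A(\val[i],\valvec[-i],r))]\bigr)$ depends on the reporting bidder's value alone, keeps the same interim allocation, remains BIC, IIR, and XP, and has total expected revenue no smaller than that of $A$. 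Since $A$ is revenue-optimal, the inequality cannot be strict, so $B$ attains the same (optimal) revenue; hence, without loss of generality, the revenue-maximizing mechanism has a deterministic payment rule $\payment[i](\val[i])$ that is a function of $\val[i]$ only.

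For such a mechanism the rest is a substitution. Determinism in $\val[i]$ gives $\intpayment[i](\val[i]) = \payment[i](\val[i])$ and $\intcostfun[i](\val[i]) = \costfun[i](\payment[i](\val[i]))$, so inverting $\costfun[i]$ yields the first equality $\intpayment[i](\val[i]) = \costfun[i]^{-1}(\intcostfun[i](\val[i]))$ of \eqref{eqn:det-payment}. For the second equality I would substitute, into the argument of $\costfun[i]^{-1}$, the perceived-payment identity $\intcostfun[i](\val[i]) = \val[i]\intalloc[i](\val[i]) - \int_0^{\val[i]}\intalloc[i](\valz[i])\,\mathrm{d}\valz[i]$, which Theorem~\ref{thm:myersonIcIrPaymentFormulaMonotone} guarantees for every BIC, IIR, XP mechanism.

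I expect the one point worth stating carefully to be the reduction in the first step: for a general (randomized) payment rule Jensen's inequality only gives $\intpayment[i](\val[i]) \le \costfun[i]^{-1}(\intcostfun[i](\val[i]))$ --- precisely the second half of Lemma~\ref{lem:revAB} --- and it is the optimality of $A$, via the feasibility and weak revenue improvement of $B$, that forces equality and thus justifies restricting to payment rules that are deterministic functions of $\val[i]$. A secondary, purely technical point is the well-definedness of $\costfun[i]^{-1}$ on the range of interim perceived payments; for the payment functions the paper focuses on, $\costfun[i](p) = p^d$ with $d \ge 2$, $\costfun[i]$ is a strictly increasing bijection of $\nsReal_{\ge 0}$ onto itself, so there is nothing to check, and in general one restricts to strictly increasing $\costfun[i]$ so that the inverse is single-valued.
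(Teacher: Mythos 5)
Your proposal is correct and matches the paper's own reasoning: the paper itself derives the corollary directly from Lemma~\ref{lem:revAB} (restrict to value-only deterministic payments without loss of revenue) combined with the perceived-payment identity of Theorem~\ref{thm:myersonIcIrPaymentFormulaMonotone}. Your added remarks on why optimality forces equality in Jensen's step and on invertibility of $\costfun[i]$ are sensible clarifications but not a departure in approach.
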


We will see in later sections that Equation~\ref{eqn:det-payment} leads to an easy calculation of the revenue of the optimal mechanism. However, this connection still does not lead to a nice characterization of the optimal mechanism, which is the outcome of a convex optimization problem, and can be arbitrarily complex. In the next sections, we will propose simple mechanisms that are approximately optimal.

\section{Upper Bound on Optimal Revenue}

By individual rationality, we can upper-bound the total expected revenue
of a mechanism by
\begin{equation}
\sum_{i \in \bidders} \Exp_{\valvec \sim F} 
\left[
\payment[i] (\valvec)
\right]\leq \sum_{i \in \bidders} \Exp_{\valvec \sim F} 
\left[ 
\costfun[i]^{-1} \left( \val[i] \alloc[i] (\valvec) \right) 
\right]
.\end{equation}
We call the quantity on the right hand side, \mydef{pseudo-surplus}, and describe how to maximize it in Lemma~\ref{lem:solProgC}.

\begin{lemma}
\label{lem:solProgC}
\label{LEM:SOLPROGC}
When, $\forall i \in \bidders$, 
$\costfun[i] = \payment[i]^d$, 
where $d > 1$,
the allocations that maximize pseudo-surplus are given by 
\begin{equation}
\alloc[i] (\val[i], \valvec[-i])
=
\frac{ \val[i]^{1 / (d-1)} }{ \sum_{j \in \bidders} \val[j]^{1 / (d-1)} }
,
\quad
\forall i \in \bidders,
\forall \valvec \in \valspace^{\numbidders}
.\end{equation}
\end{lemma}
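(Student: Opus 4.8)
The plan is to exploit separability: the pseudo-surplus $\sum_{i\in\bidders}\Exp_{\valvec}\bigl[\costfun[i]^{-1}(\val[i]\alloc[i](\valvec))\bigr]$ is, for each realization $\valvec$, a function of the variables $\alloc[1](\valvec),\dots,\alloc[\numbidders](\valvec)$ alone, and the only constraints coupling those variables — ex-post feasibility $\sum_i\alloc[i](\valvec)\le 1$ together with $\alloc[i](\valvec)\in[0,1]$ — are also imposed separately for each $\valvec$ (monotonicity is irrelevant here, since pseudo-surplus upper-bounds the revenue of \emph{any} BIC/IIR/XP mechanism and we are free to maximize over all XP-feasible allocations). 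Hence it suffices to fix an arbitrary $\valvec\in\valspace^{\numbidders}$ and maximize the integrand $\sum_{i\in\bidders}\costfun[i]^{-1}(\val[i]\alloc[i])$ subject to $\alloc[i]\ge 0$ and $\sum_i\alloc[i]\le 1$, with the bounds $\alloc[i]\le 1$ to be checked \emph{ex post}.

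Substituting $\costfun[i](\payment[i])=\payment[i]^d$, so $\costfun[i]^{-1}(y)=y^{1/d}$, the integrand becomes $g(\alloc[1],\dots,\alloc[\numbidders])=\sum_{i\in\bidders}\val[i]^{1/d}\alloc[i]^{1/d}$. Since $d>1$, each summand is concave and nondecreasing in $\alloc[i]$, so $g$ is concave on the simplex and any KKT point is a global maximizer. First I would observe that the feasibility constraint binds at an optimum — if $\sum_i\alloc[i]<1$ and some $\val[j]>0$, then raising $\alloc[j]$ strictly increases $g$ — so we may replace it by $\sum_i\alloc[i]=1$. Forming $L=\sum_i\val[i]^{1/d}\alloc[i]^{1/d}-\lambda\bigl(\sum_i\alloc[i]-1\bigr)$ and setting $\partial L/\partial\alloc[i]=\tfrac1d\,\val[i]^{1/d}\alloc[i]^{1/d-1}-\lambda=0$ gives $\alloc[i]^{(1-d)/d}\propto \val[i]^{-1/d}$, hence $\alloc[i]\propto\val[i]^{1/(d-1)}$; normalizing via $\sum_i\alloc[i]=1$ produces exactly $\alloc[i]=\val[i]^{1/(d-1)}/\sum_{j\in\bidders}\val[j]^{1/(d-1)}$. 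This lies in $[0,1]$ automatically, so dropping the $\alloc[i]\le 1$ constraints was harmless.

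The only delicate point — the "main obstacle," such as it is — is boundary behaviour. For a bidder with $\val[i]=0$ the term $\val[i]^{1/d}\alloc[i]^{1/d}$ vanishes identically, so allocating to him is wasteful; the stationarity equation must then be read as the boundary/complementary-slackness condition $\alloc[i]=0$, which is consistent with the stated formula since $\val[i]^{1/(d-1)}=0$. I would also dispatch the degenerate event $\valvec=\mathbf 0$ separately: there $g\equiv 0$ for every feasible allocation, the maximizer is non-unique, and the stated rule is simply one valid choice (and this event has probability zero under the atomless $F$, so it does not affect the pseudo-surplus). With these cases set aside, concavity of $g$ upgrades the computed critical point to the unique maximizer whenever at least one $\val[i]>0$, which establishes the lemma.
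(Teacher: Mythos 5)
Your proof is correct and uses essentially the same approach as the paper: the paper invokes the equi-marginal principle (equate the marginal contributions $\partial(\val[i]\alloc[i])^{1/d}/\partial\alloc[i]$ across bidders and normalize via ex-post feasibility), which is exactly the first-order/KKT condition of your Lagrangian $L$. You make a few points explicit that the paper leaves implicit — that pseudo-surplus decomposes pointwise in $\valvec$ so one can optimize realization-by-realization, that concavity certifies the stationary point as the global maximum, and the boundary behaviour when some $\val[i]=0$ — but the core computation and conclusion are identical.
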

The proof of Lemma \ref{lem:solProgC} is a straightforward
application of the equi-marginal principle
\cite{gossen1854entwickelung},
and we explain it in full in Appendix~\ref{sec:progCproof}.

Using pseudo-surplus and Lemma~\ref{lem:solProgC},
we now give an upper bound to the total expected revenue 
that can be generated in a BIC/IIR mechanism, $OPT$,
which is described in terms of 
the number of bidders, their values and distributions, and $d$.
\begin{lemma}
\label{lem:optUbByPseudoSurplus}
\label{cor:optUbByPseudoSurplusFordtwo}
When, $\forall i \in \bidders$, 
$\costfun[i] = \payment[i]^d$, 
where $d \geq 2$:
\begin{align}
\label{eq:optUbByPseudoSurplus}
OPT 
&\le 
\numbidders \left( \frac{\mu}{\numbidders} \right)^{1/d},
\end{align}
where $\mu = \Exp_{v\sim F} [v]$.
\end{lemma}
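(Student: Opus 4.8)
The plan is to chain together the individual-rationality-based pseudo-surplus bound stated just above the lemma with the closed form of the pseudo-surplus maximizer from Lemma~\ref{lem:solProgC}, and then finish with two applications of Jensen's inequality. Concretely, from IR we have $OPT \le \sum_{i\in\bidders} \Exp_{\valvec}\bigl[\costfun[i]^{-1}(\val[i]\alloc[i](\valvec))\bigr]$ for \emph{any} ex-post feasible allocation rule, so in particular $OPT$ is at most the maximum of the right-hand side over all ex-post feasible $\allocvec$. With $\costfun[i](p)=p^d$ we have $\costfun[i]^{-1}(y)=y^{1/d}$, and Lemma~\ref{lem:solProgC} tells us the maximizing allocation is $\alloc[i](\valvec)=\val[i]^{1/(d-1)}/\sum_{j}\val[j]^{1/(d-1)}$. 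Substituting, each summand becomes $\bigl(\val[i]\,\alloc[i](\valvec)\bigr)^{1/d}=\val[i]^{1/(d-1)}\big/\bigl(\sum_j\val[j]^{1/(d-1)}\bigr)^{1/d}$, and summing over $i\in\bidders$ collapses the numerator, giving $\bigl(\sum_j\val[j]^{1/(d-1)}\bigr)^{1-1/d}=\bigl(\sum_j\val[j]^{1/(d-1)}\bigr)^{(d-1)/d}$. Hence $OPT \le \Exp_{\valvec}\bigl[\bigl(\sum_{j\in\bidders}\val[j]^{1/(d-1)}\bigr)^{(d-1)/d}\bigr]$.

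Next I would push the expectation inward twice. Since $d\ge 2$ gives $(d-1)/d\le 1$, the map $t\mapsto t^{(d-1)/d}$ is concave on $\nsReal_{\ge0}$, so Jensen yields $\Exp_{\valvec}\bigl[\bigl(\sum_j\val[j]^{1/(d-1)}\bigr)^{(d-1)/d}\bigr] \le \bigl(\Exp_{\valvec}\bigl[\sum_j\val[j]^{1/(d-1)}\bigr]\bigr)^{(d-1)/d} = \bigl(\numbidders\,\Exp_{\val\sim F}[\val^{1/(d-1)}]\bigr)^{(d-1)/d}$, using that the $\val[j]$ are i.i.d.\ from $F$. Again because $d\ge 2$ we have $1/(d-1)\le 1$, so $t\mapsto t^{1/(d-1)}$ is concave and a second application of Jensen gives $\Exp_{\val\sim F}[\val^{1/(d-1)}]\le \mu^{1/(d-1)}$. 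Combining the two inequalities, $OPT \le \bigl(\numbidders\,\mu^{1/(d-1)}\bigr)^{(d-1)/d} = \numbidders^{(d-1)/d}\,\mu^{1/d} = \numbidders\,(\mu/\numbidders)^{1/d}$, which is exactly \eqref{eq:optUbByPseudoSurplus}.

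I do not expect a genuine obstacle here: the only content is recognizing that IR alone already caps revenue by pseudo-surplus and that Lemma~\ref{lem:solProgC} hands us the exact maximizer, after which everything is algebra plus two invocations of Jensen. The one thing worth checking carefully is that the concavity hypotheses are met in the boundary case $d=2$, where $1/(d-1)=1$ and the inner Jensen step degenerates to an equality (still valid), and $(d-1)/d=1/2<1$ so the outer step is strict concavity. It is also worth remarking that the resulting bound is in general loose, since it discards all incentive constraints (retaining only IR) before even maximizing pseudo-surplus; this slack is precisely what leaves room for the constant-factor approximation guarantees proved in the subsequent sections.
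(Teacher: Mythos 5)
Your argument mirrors the paper's own proof step for step: start from the IR-derived pseudo-surplus bound, substitute the explicit maximizer from Lemma~\ref{lem:solProgC} to collapse the sum to $\bigl(\sum_j \val[j]^{1/(d-1)}\bigr)^{(d-1)/d}$, and then apply Jensen twice with the concave maps $t\mapsto t^{(d-1)/d}$ and $t\mapsto t^{1/(d-1)}$. Everything is correct, including the boundary check at $d=2$, so there is nothing to add.
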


\begin{proof}
Starting with the optimal solution to pseudo-surplus,
given by Lemma~\ref{lem:solProgC},
we can upper-bound $OPT$ as follows:
\begin{align*}
OPT
\leq~& 
\Exp_{\valvec} \left[
\sum_{i \in \bidders} \left( v_i \cdot \frac{v_i^{1/(d-1)}}{\sum_j v_j^{1/(d-1)}} \right)^{1/d}
\right] 
= 
\Exp_{\valvec} \left[
\sum_{i \in \bidders} v_i^{1/(d-1)} \frac{1}{\left( \sum_j v_j^{1/(d-1)} \right)^{1/d}} 
\right]
= 
\Exp_{\valvec} \left[
\left( \sum_{i \in \bidders} v_i^{1/(d-1)} \right)^{\frac{d-1}{d}}
\right]
.\end{align*}
Observe that $f(x) = x^{(d-1)/d}$ is a concave function for $d > 1$, 
thus $E[f(X)]\leq f(E[X])$. 
Hence:
\begin{align*}
OPT \leq \left(\sum_{i\in N} E\left[v_i^{1/(d-1)}\right]\right)^{\frac{d-1}{d}}
.\end{align*}
Similarly $g(x) = x^{1/(d-1)}$ is a concave function for any $d \geq 2$. 
Hence:
\begin{align*}
OPT 
\leq 
\left(\sum_{i\in N} \left(E[v_i]\right)^{1/(d-1)}\right)^{\frac{d-1}{d}} 
= 
\left(n \mu^{1/(d-1)}\right)^{\frac{d-1}{d}} 
= 
n \left(\frac{\mu}{n}\right)^{1/d}
.\end{align*}

\end{proof}

We now upper bound $OPT$ when assuming values are drawn from an MHR distribution.

\begin{theorem}
\label{thm:pseudoSurplusUbMhr}
When, $\forall i \in \bidders$, 
$\costfun[i] = \payment[i]^d$, 
where $d \ge 2$
and the value distribution $F$ is an MHR distribution:
\begin{align}
OPT
&\le
\numbidders \left(\frac{e \kappa}{n}\right)^{1/d}
.\end{align}
\end{theorem}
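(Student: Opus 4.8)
The plan is to chain together the generic pseudo-surplus bound with the two revenue-curve lemmas for MHR/regular distributions already established. Concretely, Lemma~\ref{lem:optUbByPseudoSurplus} gives, for $d \ge 2$, the distribution-agnostic bound $OPT \le \numbidders (\mu/\numbidders)^{1/d}$, where $\mu = \Exp_{v\sim F}[v]$. So it suffices to replace $\mu$ by $e\kappa$ in this expression, which requires showing $\mu \le e\kappa$ for MHR distributions, and then invoking monotonicity of $x \mapsto x^{1/d}$ on $\nsReal_{\ge 0}$.

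The inequality $\mu \le e\kappa$ follows immediately by composing Lemma~\ref{lem:dhangwatnotai} and Lemma~\ref{lem:samuel}. Lemma~\ref{lem:dhangwatnotai} states that for any MHR distribution $R(q^*) \ge \mu/e$, i.e.\ $\mu \le e\,R(q^*)$. Since every MHR distribution is regular (as noted in the text, because $\phi(q) = v(q) - 1/h(v(q))$ is monotone decreasing when $h$ is monotone non-decreasing), Lemma~\ref{lem:samuel} applies and gives $R(q^*) \le \kappa$. Combining, $\mu \le e\,R(q^*) \le e\kappa$.

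Substituting into the pseudo-surplus bound and using that $t \mapsto t^{1/d}$ is non-decreasing,
\begin{equation}
OPT \;\le\; \numbidders \left(\frac{\mu}{\numbidders}\right)^{1/d} \;\le\; \numbidders \left(\frac{e\kappa}{\numbidders}\right)^{1/d},
\end{equation}
which is the claimed bound. There is no real obstacle here: the content is entirely in the previously established results (Lemma~\ref{lem:optUbByPseudoSurplus}, Lemma~\ref{lem:dhangwatnotai}, Lemma~\ref{lem:samuel}), and the only thing to be careful about is that the MHR hypothesis is what licenses \emph{both} $\mu \le e R(q^*)$ and, via regularity, $R(q^*) \le \kappa$; since we assume $F$ is MHR in the statement, both are available simultaneously. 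One could optionally remark that the bound is only meaningful once $\numbidders$ is large enough that $e\kappa/\numbidders$ is small, but that is not needed for the formal claim.
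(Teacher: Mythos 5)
Your proposal is correct and matches the paper's own proof essentially verbatim: the paper also chains Lemma~\ref{lem:dhangwatnotai} and Lemma~\ref{lem:samuel} to get $\mu \le e R(q^*) \le e\kappa$, and then substitutes into the bound $OPT \le \numbidders(\mu/\numbidders)^{1/d}$ from Lemma~\ref{lem:optUbByPseudoSurplus}. Your additional remarks (that MHR implies regularity, so both lemmas apply, and that monotonicity of $t \mapsto t^{1/d}$ justifies the substitution) simply make explicit what the paper leaves implicit.
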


\begin{proof}
Lemma~\ref{lem:dhangwatnotai}
and
Lemma~\ref{lem:samuel}
tell us that: $\mu \le e R(q^*) \leq e\kappa$. 
This, combined with Lemma~\ref{cor:optUbByPseudoSurplusFordtwo},
proves the theorem.
\end{proof}

\section{Reserve Price Mechanisms}

We now turn to the design of simple prior-free mechanisms for the convex payment setting. 
We begin our analysis by looking at \emph{uniform-allocation reserve price mechanisms},  
i.e. mechanisms that allocate uniformly at random to all bidders whose value $\val[i]$ 
is above some reserve price $r$,\footnote{Or, if a good is divisible, 
uniformly across the total number of winners.} 
or equivalently, to bidders whose quantile $\quantile[i]$ is below some quantile reserve $\hat{\quantile}$. 
These mechanisms charge each of the bidders who bid at least the reserve  
$\costfun[i]^{-1} \left( \frac{\val(\hat{q})}{Z} \right)$. 
This payment rule makes the mechanism not only BIC, but ex-post IC.
%
We then describe the performance of mechanisms that select a quantile reserve uniformly at random.
Finally, we show how selecting a quantile uniformly at random 
can be easily implemented as a prior-free mechanism 
by simply picking one bidder at random from the $\numbidders$ bidders 
and using him as a reserve price setter. 
This leads to the main result of this section: 
a constant-factor approximately optimal prior-free mechanism. 
We conclude this section by noting that 
if one is allowed to make minimal use of the distribution, 
then a reserve price which is equal to the median of the distribution 
yields a better approximation ratio.  
In the Appendix, we also show that a reserve price that depends on both 
the distribution and the exponent in the payment function 
can lead to even better worst-case performance.

\paragraph{Prior-free mechanism.} 
We begin with the analysis of the revenue of a 
\emph{uniform-allocation reserve price mechanism}, 
with quantile reserve $\hat{\quantile}$.

\begin{lemma}
\label{lem:reserveQuantileApx}
Consider a convex payment setting with 
$\costfun[i] = \payment[i]^d$, $\forall i \in \bidders$, where $d \ge 1$.
Let $APX(\hat{\quantile})$ be the expected revenue of a mechanism that allocates
uniformly across all bidders with quantile $q_i\leq \hat{q}$
and charges each of these $Z$ bidders ex-post truthful payments 
$\left( \val(\hat{q}) / Z \right)^{1/d}$.
Then:
\begin{align}
APX(\hat{\quantile})
&\ge
\numbidders
\left(
\frac{\val(\hat{q})}%
{1 + (\numbidders - 1) \hat{q}}
\right)^{1/d}
\hat{q}
.\end{align}
\end{lemma}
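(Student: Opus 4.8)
The plan is to compute $APX(\hat{q})$ exactly by a bidder-by-bidder decomposition, and then invoke Jensen's inequality a single time. First I would fix a bidder $i$. Since quantiles are distributed uniformly on $[0,1]$, bidder $i$ is a winner (i.e.\ $q_i \le \hat{q}$) with probability $\hat{q}$. Let $Z_{-i}$ denote the number of the remaining $n-1$ bidders whose quantile also falls below $\hat{q}$; these events are independent of $q_i$ and of one another, so $Z_{-i} \sim \mathrm{Binomial}(n-1, \hat{q})$ with $\mathbb{E}[Z_{-i}] = (n-1)\hat{q}$. Conditioned on $i$ winning, the total number of winners is $Z = 1 + Z_{-i}$, so bidder $i$ is charged $\bigl(v(\hat{q})/(1+Z_{-i})\bigr)^{1/d}$. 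Taking expectations over $q_i$ and $Z_{-i}$ (which are independent), bidder $i$'s expected payment is $\hat{q}\, v(\hat{q})^{1/d}\,\mathbb{E}\bigl[(1+Z_{-i})^{-1/d}\bigr]$, and summing over the $n$ symmetric bidders gives the exact identity
\[
APX(\hat{q}) \;=\; n\, \hat{q}\, v(\hat{q})^{1/d}\, \mathbb{E}\!\left[(1+Z_{-i})^{-1/d}\right].
\]

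Next I would lower-bound the remaining expectation. The map $t \mapsto (1+t)^{-1/d}$ is convex on $[0,\infty)$, since its second derivative $\tfrac1d\bigl(\tfrac1d+1\bigr)(1+t)^{-1/d-2}$ is positive. Hence Jensen's inequality gives $\mathbb{E}\bigl[(1+Z_{-i})^{-1/d}\bigr] \ge \bigl(1+\mathbb{E}[Z_{-i}]\bigr)^{-1/d} = \bigl(1+(n-1)\hat{q}\bigr)^{-1/d}$, and substituting this into the identity above yields precisely
\[
APX(\hat{q}) \;\ge\; n\left(\frac{v(\hat{q})}{1+(n-1)\hat{q}}\right)^{1/d}\hat{q},
\]
as claimed.

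I do not anticipate a real obstacle; the only thing to get right is setting up the decomposition from the correct viewpoint. If one instead writes $APX(\hat{q}) = v(\hat{q})^{1/d}\,\mathbb{E}[Z^{(d-1)/d}]$ by summing the common winner payment over all $Z$ winners, then one needs a lower bound on $\mathbb{E}[Z^{(d-1)/d}]$, but $t\mapsto t^{(d-1)/d}$ is concave, so Jensen points the wrong way. The per-bidder view makes the relevant transformation $t\mapsto(1+t)^{-1/d}$, which is convex, so Jensen applies cleanly. Two minor remarks belong in the write-up: the case $Z_{-i}=0$ (a lone winner paying $v(\hat{q})^{1/d}$) is handled automatically by the formula $(1+Z_{-i})^{-1/d}$, and the edge case $\hat{q}=0$, where both sides vanish, is consistent. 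Note also that this bound uses neither the MHR nor the regularity assumption on $F$; it holds for any atomless distribution.
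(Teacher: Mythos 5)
Your proof is correct and follows essentially the same route as the paper's: both condition on bidder $i$ winning (probability $\hat{q}$), identify the number of competing winners as a Binomial$(n-1,\hat{q})$ random variable, and apply Jensen's inequality to the convex map $t\mapsto(1+t)^{-1/d}$. Your write-up is a bit more explicit about the independence used to factor out $\hat{q}$, and your closing remark about why the per-bidder decomposition is the right one (so that Jensen points the correct way) is a nice addition, but the underlying argument is the same.
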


\begin{proof}
The expected revenue the mechanism with reserve price $\val(\hat{q})$ is
\begin{align*}
APX(\hat{\quantile})
&=
\Exp_{\valvec \sim F}
\left[
\sum_{i = 1}^{\numbidders}
\left(
\frac{\val(\hat{q}) \mathbbm{1}_{\val[i] \ge \val(\hat{q})}}%
{\sum_{j = 1}^{\numbidders} \mathbbm{1}_{\val[j] \ge \val(\hat{q})}}
\right)^{1/d}
\right]
.\end{align*}
The probability that $\val[i] \ge \val(\hat{q})$ is 
$1 - F \left( \val ( \hat{q} ) \right) = \hat{q}$, 
and if there exists a winner, the denominator is at least one,
so
\begin{align*}
APX(\hat{\quantile})
&=
\Exp_{\valvec \sim F}
\left[
\sum_{i = 1}^{\numbidders}
\left(
\frac{\val(\hat{q})}%
{1 + \sum_{j = 1}^{\numbidders-2} \mathbbm{1}_{\val[j] \ge \val(\hat{q})}}
\right)^{1/d}
\right] \hat{q}
=
\sum_{i = 1}^{\numbidders}
\Exp_{\valvec \sim F}
\left[
\left(
\frac{1}%
{1 + \sum_{j = 1}^{\numbidders - 1} \mathbbm{1}_{\val[j] \ge \val(\hat{q})}}
\right)^{1/d}
\right] 
\val(\hat{q})^{1/d}
\hat{q}
.\end{align*}
The function 
$h(x) = 1/(1+x)^{1/d}$
is convex for $d \ge 1$.
By Jensen's inequality,
$\Exp [ h(x) ] \ge h (\Exp [x])$,
so we have
\begin{align*}
\Exp_{\valvec \sim F}
\left[
\left(
\frac{1}%
{1 + \sum_{j = 1}^{\numbidders - 1} \mathbbm{1}_{\val[j] \ge \val(\hat{q})}}
\right)^{1/d}
\right]
&\ge
\left(
\frac{1}%
{1 + \Exp_{\valvec \sim F}
\left[
\sum_{j = 1}^{\numbidders - 1} \mathbbm{1}_{\val[j] \ge \val(\hat{q})}
\right]}
\right)^{1/d}=
\left(
\frac{1}%
{1 + (\numbidders - 1) \hat{q}}
\right)^{1/d}
,\end{align*}
and
\begin{align*}
APX
&\ge
\sum_{i = 1}^{\numbidders}
\left(
\frac{1}%
{1 + (\numbidders - 1) \hat{q}}
\right)^{1/d}
\val(\hat{q})^{1/d}
\hat{q}
=
\numbidders
\left(
\frac{\val(\hat{q})}%
{1 + (\numbidders - 1) \hat{q}}
\right)^{1/d}
\hat{q}
.\end{align*}
\end{proof}

Given the performance of a mechanism with quantile reserve $\hat{q}$,
we now describe how well a mechanism does by selecting the quantile
reserve uniformly at random.

\begin{lemma}[Random Reserve Price Mechanism]
\label{lem:pfApx}
Consider a convex payment setting with 
$\costfun[i] = \payment[i]^d$, $\forall i \in \bidders$, where $d \ge 1$.
Let $APX$ be the expected revenue of a mechanism which draws a quantile reserve 
$\hat{q}$ uniformly at random in $[0,1]$, and then allocates
uniformly across all bidders with quantile $q_i \leq \hat{q}$, 
and charges each of these $Z$ bidders ex-post truthful payments 
$\left( \val ( \hat{q} ) / Z \right)^{1/d}$.
Then:
\begin{align}
APX\geq \frac{1}{8} 
\numbidders^{1 - 1/d}
\kappa^{1/d}
.\end{align}
\end{lemma}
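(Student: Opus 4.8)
The plan is to average the per-reserve bound of Lemma~\ref{lem:reserveQuantileApx} over the uniformly random quantile reserve, and then to discard the contribution of large quantiles, keeping only $\hat{q}\in[0,\tfrac12]$, where both the inverse quantile function $v(\cdot)$ and the denominator $1+(\numbidders-1)\hat{q}$ are trivial to bound.

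First, by construction $APX=\int_0^1 APX(\hat{q})\,\mathrm{d}\hat{q}$, and Lemma~\ref{lem:reserveQuantileApx} gives, for every $\hat{q}\in[0,1]$,
\[
APX(\hat{q})\ \ge\ \numbidders\left(\frac{v(\hat{q})}{1+(\numbidders-1)\hat{q}}\right)^{1/d}\hat{q}.
\]
Since the integrand is nonnegative, restricting the integral to $\hat{q}\in[0,\tfrac12]$ only decreases it. On that interval two elementary observations apply: the inverse quantile function is non-increasing in the quantile, so $v(\hat{q})\ge v(\tfrac12)=\kappa$; and $1+(\numbidders-1)\hat{q}\le 1+\tfrac{\numbidders-1}{2}=\tfrac{\numbidders+1}{2}\le\numbidders$ for all $\numbidders\ge 1$. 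Hence, for $\hat{q}\in[0,\tfrac12]$,
\[
APX(\hat{q})\ \ge\ \numbidders\left(\frac{\kappa}{\numbidders}\right)^{1/d}\hat{q}\ =\ \numbidders^{\,1-1/d}\,\kappa^{1/d}\,\hat{q}.
\]

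Integrating this last inequality over $[0,\tfrac12]$ then yields
\[
APX\ \ge\ \int_0^{1/2}\numbidders^{\,1-1/d}\,\kappa^{1/d}\,\hat{q}\,\mathrm{d}\hat{q}\ =\ \frac{1}{8}\,\numbidders^{\,1-1/d}\,\kappa^{1/d},
\]
which is exactly the claimed bound.

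There is no genuinely hard step: the single idea is the truncation to $[0,\tfrac12]$, which simultaneously dodges the degenerate regime $\hat{q}\to 1$ (where $v(\hat{q})\to 0$ and the pointwise bound is useless) and is precisely what makes the median $\kappa=v(\tfrac12)$ surface in the final expression. One could instead optimize the cutoff, or evaluate the full integral against the specific distribution, to sharpen the constant $\tfrac18$---essentially what the detail-free refinements discussed later do---but this crude estimate already delivers the constant-factor guarantee.
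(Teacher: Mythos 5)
Your proof is correct, and it arrives at exactly the same bound via a genuinely different truncation. The paper restricts the integral to the \emph{upper} half of quantile space, $\hat{q}\in[\tfrac12,1]$, rewrites $v(\hat{q})^{1/d}\hat{q}=R(\hat{q})/v(\hat{q})^{1-1/d}$, bounds $v(\hat{q})\le\kappa$ there, and then invokes concavity of the revenue curve $R$ (i.e.\ regularity) to show $\int_{1/2}^{1}R(\hat{q})\,\mathrm{d}\hat{q}\ge\kappa/8$ by a triangle-area argument. You instead truncate to the \emph{lower} half $[0,\tfrac12]$, where $v(\hat{q})\ge\kappa$ by monotonicity alone, and the integral $\int_0^{1/2}\hat{q}\,\mathrm{d}\hat{q}=\tfrac18$ falls out directly. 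Your route is more elementary: it dispenses with the revenue-curve/concavity machinery entirely and therefore does not depend on the distribution being regular (the lemma's hypotheses do not actually state regularity, though the paper's proof implicitly uses it; in the downstream theorems it is always applied under MHR, so this never bites). What the paper's route buys, by contrast, is a template that meshes with the more refined reserve-optimization arguments in the appendix, where the concavity/triangle bound is generalized (Lemma~\ref{lem:general-prophet}); your cruder cut does not extend to those as naturally. Both deliver the constant $\tfrac18$ here.
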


\begin{proof}
A lower bound on the total expected revenue of the mechanism can be computed
by integrating $APX(\hat{q})$ with respect to quantile $\hat{q}$:
\begin{align*}
APX
&=
\int_{0}^{1} APX(\hat{q}) \, \mathrm{d}\hat{q}
\end{align*}
Invoking Lemma~\ref{lem:reserveQuantileApx}, 
and since $\hat{\quantile}\in[0,1]$, 
the quantity $APX(\hat{q})$ can be lower-bounded as follows:
\begin{align}
APX(\hat{q})
&\ge \numbidders^{1 - 1/d}
\val(\hat{q})^{1/d}
\hat{q}
.\end{align} 
Thus we get:
\begin{align*}
APX &\geq 
\numbidders^{1 - 1/d}
\int_{0}^{1}
\frac{\val(\hat{q}) \hat{q}}{\val(\hat{q})^{1 - 1/d}}
 \, \mathrm{d}\hat{q}
\ge
\numbidders^{1 - 1/d}
\int_{1/2}^{1}
\frac{R (\hat{q})}{\val(\hat{q})^{1 - 1/d}}
 \, \mathrm{d}\hat{q}
.\end{align*}
Since $\kappa = \val(1/2) \ge \val(\hat{q})$ for $1/2 \le \hat{q} \le 1$,
we have
\begin{align}
APX &\ge
\frac{\numbidders^{1 - 1/d}}{\kappa^{1 - 1/d}}
\int_{1/2}^{1}
R (\hat{q})
 \, \mathrm{d}\hat{q}
,\end{align}
and because $R (\hat{q})$ is concave, 
$\int_{1/2}^{1}
R (\hat{q})
 \, \mathrm{d}\hat{q}
\ge \frac{1}{2} \frac{1}{2} R(1/2)
= \kappa / 8$,
where the last step follows from the proof of Lemma~\ref{lem:samuel},
so
\begin{align}
APX
&\ge
\frac{1}{8} 
\numbidders^{1 - 1/d}
\kappa^{1/d}
.\end{align}
\end{proof}

We are now ready to prove the main theorem of this section: 
a constant factor prior-free mechanism. 
Observe that to draw a random quantile reserve, 
we do not need to know the distribution of values, 
as the quantile of a randomly selected bidder 
is equivalent to a randomly drawn quantile reserve. 
That is, we can sacrifice a randomly selected bidder by using his quantile as the reserve 
and run a random reserve price mechanism among the $\numbidders - 1$ bidders. 
Notice that this mechanism is prior-free.
We call this mechanism the \emph{random price setter mechanism} 
and show that it is approximately optimal.

\begin{theorem}[Prior-Free Mechanism]
Consider a convex payment setting with 
$\costfun[i] = \payment[i]^d$, $\forall i \in \bidders$, 
where $d \ge 2$ and the value distribution is an MHR distribution.
Let $APX$ be the expected revenue of the random price setter mechanism.
The random price setter mechanism achieves revenue $APX$ which satisfies:
\begin{align}
\label{eq:priorFreeRatio}
\frac{APX}{OPT}
&\ge
\frac{1}{8} 
\left(\frac{\numbidders - 1}{\numbidders}\right)^{1 - 1/d}
\frac{1}{e^{1/d}}\geq \frac{1}{16e}
.\end{align}
\end{theorem}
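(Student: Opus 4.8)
The plan is to bound the revenue of the random price setter mechanism from below by relating it to the random reserve price mechanism analyzed in Lemma~\ref{lem:pfApx}, and then to divide by the upper bound on $OPT$ from Theorem~\ref{thm:pseudoSurplusUbMhr}. First I would observe that sacrificing one bidder (chosen uniformly at random among the $\numbidders$ bidders) and using his value as the reserve price is equivalent, in distribution, to drawing a quantile reserve $\hat{q}$ uniformly at random in $[0,1]$, since quantiles of draws from $F$ are uniform on $[0,1]$ (as noted in the preliminaries). Hence the random price setter mechanism, run on the remaining $\numbidders - 1$ bidders, achieves exactly the expected revenue $APX$ of the random reserve price mechanism with $\numbidders - 1$ bidders in place of $\numbidders$. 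Applying Lemma~\ref{lem:pfApx} with $\numbidders - 1$ in place of $\numbidders$ therefore gives $APX \ge \tfrac{1}{8}(\numbidders - 1)^{1 - 1/d}\kappa^{1/d}$.

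Next I would invoke Theorem~\ref{thm:pseudoSurplusUbMhr}, which states $OPT \le \numbidders (e\kappa/\numbidders)^{1/d} = \numbidders^{1 - 1/d}(e\kappa)^{1/d}$. Dividing the two bounds, the $\kappa^{1/d}$ factors cancel, as does the extra factor of $e^{1/d}$ appearing only in the denominator:
\begin{equation}
\frac{APX}{OPT}
\ge
\frac{\tfrac{1}{8}(\numbidders - 1)^{1 - 1/d}\kappa^{1/d}}
{\numbidders^{1-1/d}(e\kappa)^{1/d}}
=
\frac{1}{8}\left(\frac{\numbidders - 1}{\numbidders}\right)^{1 - 1/d}\frac{1}{e^{1/d}}.
\end{equation}
This establishes the first inequality in~\eqref{eq:priorFreeRatio}.

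Finally I would lower-bound this expression by a universal constant. Since $d \ge 2$, we have $1/d \le 1/2$, so $1 - 1/d \ge 1/2$, and because $(\numbidders - 1)/\numbidders \le 1$, the quantity $\left((\numbidders-1)/\numbidders\right)^{1-1/d}$ is minimized (for fixed $\numbidders$) by taking the exponent as large as possible; but more simply, $\left((\numbidders-1)/\numbidders\right)^{1-1/d} \ge (\numbidders-1)/\numbidders \ge 1/2$ whenever $\numbidders \ge 2$ (using that the base lies in $[0,1]$ and the exponent lies in $[0,1]$, so raising to that exponent only increases the value, hence $\left((\numbidders-1)/\numbidders\right)^{1-1/d}\ge (\numbidders-1)/\numbidders$). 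Likewise $e^{1/d} \le e^{1/2} \le e$, so $1/e^{1/d} \ge 1/e$. Combining, $APX/OPT \ge \tfrac{1}{8}\cdot\tfrac{1}{2}\cdot\tfrac{1}{e} = \tfrac{1}{16e}$.

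The only mild subtlety — the ``main obstacle,'' such as it is — is the reduction in the first paragraph: one must be careful that conditioning on the sacrificed bidder's realized value and then running the random reserve price mechanism on the other $\numbidders-1$ bidders genuinely reproduces the setup of Lemma~\ref{lem:pfApx}, i.e.\ that the sacrificed bidder's quantile is uniform on $[0,1]$ and independent of the other bidders' values. This follows from independence of the $\val[i]$ across bidders and the fact that $q(\val[i]) = 1 - F(\val[i])$ is uniform on $[0,1]$ when $\val[i] \sim F$. Everything after that is routine algebra and the two monotonicity observations about powers with base in $[0,1]$ and exponent in $[0,1]$.
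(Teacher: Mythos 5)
Your proof is correct and takes essentially the same route as the paper's: reduce the random price setter mechanism to the random reserve price mechanism on $\numbidders - 1$ bidders, apply Lemma~\ref{lem:pfApx}, and divide by the $OPT$ bound from Theorem~\ref{thm:pseudoSurplusUbMhr}. You additionally spell out the derivation of the universal constant $1/(16e)$, which the paper leaves implicit, and you correctly note the implicit assumption $\numbidders \ge 2$.
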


\begin{proof}
Observe that the revenue of the mechanism is equal to 
the revenue of the random reserve price mechanism with $\numbidders-1$ bidders. 
Thus by applying Lemma~\ref{lem:pfApx} we have:
\begin{align}
APX
&\ge
\frac{1}{8} 
(\numbidders - 1)^{1 - 1/d}
\kappa^{1/d}
.\end{align}
By Theorem~\ref{thm:pseudoSurplusUbMhr}, we also have
$OPT \le \numbidders^{1 - 1/d} (e \kappa)^{1/d}$. 
Combining the two bounds, yields the theorem.

\end{proof}

For $d \ge 2$, the approximation ratio given by Equation~\eqref{eq:priorFreeRatio} 
is $\frac{1}{8 e^{1/d}} > .075$ in the limit, as $\numbidders$ tends towards infinity,
and is $\frac{1}{8}$ as $d$ tends towards infinity as well.
We explore how well the prior-free mechanism does empirically using randomly generated
MHR distributions in Section~\ref{sec:Experiments}.

\paragraph{Detail-free mechanisms.} 
We now show that if the auctioneer has some knowledge of the distribution of values,
then mechanisms can generate more expected revenue.
Specifically, knowing the median of the distribution is sufficient 
for getting a much better approximation ratio. 
We conclude this section by providing such detail-free designs with better approximation ratios. 
The mechanisms that we propose optimizes over the quantile reserve $\hat{q}$ 
so as to get better approximation ratios.

\begin{theorem}[Median Reserve]
\label{thm:apxRatioMedianDgeTwo}
The approximation ratio of the mechanism with median reserve price $\kappa$ 
(i.e., $\hat{q} = \frac{1}{2}$) when, $\forall i \in \bidders$, 
$\costfun[i] = \payment[i]^d$, 
where $d \ge 2$ with MHR distributions, is:
\begin{align}
\label{eq:postedPriceMedianRatioDgeTwo}
\frac{APX}{OPT}
&\ge
\frac{1}{2} 
\left(\frac{2 \numbidders}{e (\numbidders + 1)}\right)^{1/d}
.\end{align}
\end{theorem}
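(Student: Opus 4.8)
The plan is to combine the general lower bound on the revenue of a uniform-allocation reserve price mechanism from Lemma~\ref{lem:reserveQuantileApx}, specialized at $\hat{q}=\tfrac12$, with the MHR upper bound on $OPT$ from Theorem~\ref{thm:pseudoSurplusUbMhr}. Everything reduces to a short algebraic simplification, so there is no serious obstacle; the only thing to be careful about is tracking the exponents $1/d$ and the factor $n^{1-1/d}$ correctly.

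First I would instantiate Lemma~\ref{lem:reserveQuantileApx} at $\hat{q} = \tfrac12$. Since $\val(1/2) = \kappa$ by definition and $1 + (n-1)\cdot\tfrac12 = \tfrac{n+1}{2}$, the bound becomes
\begin{equation}
APX = APX\!\left(\tfrac12\right) \ge n\left(\frac{\kappa}{(n+1)/2}\right)^{1/d}\cdot \frac12 = \frac{n}{2}\left(\frac{2\kappa}{n+1}\right)^{1/d}.
\end{equation}
Here I use that the mechanism with median reserve price $\kappa$ is exactly the uniform-allocation reserve price mechanism with quantile reserve $\hat{q} = 1/2$, so Lemma~\ref{lem:reserveQuantileApx} applies directly (note $d \ge 2 \ge 1$, so its hypothesis is met).

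Next I would invoke Theorem~\ref{thm:pseudoSurplusUbMhr}, which under the stated hypotheses ($d \ge 2$, MHR) gives $OPT \le n\left(\frac{e\kappa}{n}\right)^{1/d} = n^{1 - 1/d}(e\kappa)^{1/d}$. Dividing the two bounds:
\begin{equation}
\frac{APX}{OPT} \ge \frac{\tfrac{n}{2}\left(\tfrac{2\kappa}{n+1}\right)^{1/d}}{n^{1-1/d}(e\kappa)^{1/d}} = \frac12\, n^{1/d}\left(\frac{2\kappa}{(n+1)\,e\kappa}\right)^{1/d} = \frac12\left(\frac{2n}{e(n+1)}\right)^{1/d},
\end{equation}
where the $\kappa$'s cancel and $n / n^{1-1/d} = n^{1/d}$ is absorbed into the power. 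This is precisely Equation~\eqref{eq:postedPriceMedianRatioDgeTwo}, completing the proof. The main "obstacle," such as it is, is simply making sure the median reserve mechanism is recognized as the $\hat q = 1/2$ instance covered by Lemma~\ref{lem:reserveQuantileApx} so that no new computation of $APX$ is needed.
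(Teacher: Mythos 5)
Your proposal is correct and follows essentially the same route as the paper: instantiate Lemma~\ref{lem:reserveQuantileApx} at $\hat{q}=\tfrac12$ to get $APX \ge \tfrac{n}{2}\left(\tfrac{2\kappa}{n+1}\right)^{1/d}$, apply Theorem~\ref{thm:pseudoSurplusUbMhr} to bound $OPT \le n^{1-1/d}(e\kappa)^{1/d}$, and divide. The algebra and the identification of the median reserve mechanism as the $\hat{q}=\tfrac12$ instance are both exactly as in the paper's proof.
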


\begin{proof}
Lemma~\ref{lem:reserveQuantileApx} tells us that when $\hat{q} = \frac{1}{2}$,
\begin{align}
APX
&\ge
\numbidders
\left(
\frac{\kappa}%
{1 + (\numbidders - 1) / 2}
\right)^{1/d}
\frac{1}{2}
=
\frac{\numbidders}{2}
\left(
\frac{2 \kappa}%
{\numbidders + 1}
\right)^{1/d}
,\end{align}
and by Theorem~\ref{thm:pseudoSurplusUbMhr},
we have:
\begin{align}
\frac{APX}{OPT}
&\ge
\frac{\numbidders}{2}
\left(
\frac{2 \kappa}%
{\numbidders + 1}
\right)^{1/d}
\frac{1}{\numbidders^{(d-1)/d} (e \kappa)^{1/d}}
=
\frac{1}{2}
\left(
\frac{2 \numbidders}%
{e(\numbidders + 1)}
\right)^{1/d}
.\end{align}
\end{proof}

Equation~\eqref{eq:postedPriceMedianRatioDgeTwo}
tells us that for large number of bidders,
the approximation ratio is $\frac{1}{2} \left(\frac{2}{e} \right)^{1/d}$.
As $d \ge 2$, this is at least $0.42$.
Further, as $d$ tends towards infinity, the approximation ratio approaches $\frac{1}{2}$.
Thus, using distribution knowledge improves performance by a factor of $4$
over the prior-free mechanism.

We now show that if we use a quantile reserve that is also 
dependent on the exponent in the payment function (rather than always the median), 
then we can further improve the approximation ratio, as is given in the following theorem 
(whose proof and analysis is deferred to Appendix~\ref{sec:cost-specific}).

\begin{theorem}[Cost-Optimized Reserve]
\label{thm:CostOptimizedReserve}
Assume that the value distributions satisfy the Monotone Hazard Rate (MHR) condition, 
and that the payment function for each bidder $i \in \bidders$
is of the form $\costfun[i] (p) = p^d$, for $d \geq 2$. 
Allocating uniformly at random to all bidders whose quantile is below 
the quantile reserve $\hat{q} = \max \left\{ \frac{1}{2}, 1 - \frac{1}{d-1} \right\}$ 
and charging them the ex-post truthful payment of $\left( \frac{v(\hat{q})}{Z} \right)^{1/d}$, 
where $Z$ is the realized number of bidders with quantile below $\hat{q}$, 
achieves revenue at least 
$\left( \frac{\numbidders}{\numbidders + 1} \right)^{1/d} \frac{1}{2\sqrt{e}}$ for $d \in [2,3)$ 
and at least 
$\left( \frac{\numbidders}{\numbidders+1} \right)^{1/d} \frac{1}{(4e(d-2))^{1/d}}$ for $d \geq 3$ 
of the optimal revenue.
\end{theorem}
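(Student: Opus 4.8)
The plan is to follow exactly the template used in Theorem~\ref{thm:apxRatioMedianDgeTwo}, but with the sharper choice of quantile reserve $\hat q = \max\{\tfrac12, 1 - \tfrac1{d-1}\}$, and to be more careful about lower-bounding the quantity $v(\hat q)^{1/d}\hat q$ that appears after applying Lemma~\ref{lem:reserveQuantileApx}. First I would invoke Lemma~\ref{lem:reserveQuantileApx} with this $\hat q$ to get
\begin{equation*}
APX \ge \numbidders\left(\frac{v(\hat q)}{1 + (\numbidders-1)\hat q}\right)^{1/d}\hat q
\ge \numbidders^{1-1/d}\left(\frac{v(\hat q)}{1+\numbidders\hat q}\cdot\numbidders\right)^{1/d}\hat q,
\end{equation*}
and, using $1+(\numbidders-1)\hat q \le (\numbidders+1)\hat q$ for $\hat q \ge \tfrac12$, simplify the denominator to $(\numbidders+1)\hat q$, yielding $APX \ge \left(\frac{\numbidders}{\numbidders+1}\right)^{1/d}\numbidders^{1-1/d}\,v(\hat q)^{1/d}\,\hat q^{1-1/d}$. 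Dividing by the MHR upper bound $OPT \le \numbidders^{1-1/d}(e\kappa)^{1/d}$ from Theorem~\ref{thm:pseudoSurplusUbMhr} collapses the $\numbidders^{1-1/d}$ factors and leaves
\begin{equation*}
\frac{APX}{OPT} \ge \left(\frac{\numbidders}{\numbidders+1}\right)^{1/d}\left(\frac{v(\hat q)\,\hat q^{d-1}}{e\,\kappa}\right)^{1/d}.
\end{equation*}

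The crux is therefore to lower-bound $v(\hat q)\,\hat q^{d-1}/\kappa$. I would split into the two regimes. For $d \in [2,3)$ we have $1 - \tfrac1{d-1} \le \tfrac12$, so $\hat q = \tfrac12$, $v(\hat q) = \kappa$, and $\hat q^{d-1} = 2^{-(d-1)} \ge 2^{-2} = \tfrac14$ is too lossy; instead keep it as $2^{-(d-1)}$ and observe $\left(\tfrac{2^{-(d-1)}}{e}\right)^{1/d} = \tfrac1{2^{(d-1)/d}e^{1/d}} = \tfrac{2^{1/d}}{2\,e^{1/d}} \ge \tfrac1{2\sqrt e}$ since $2^{1/d} \ge e^{1/d - 1/2}$ fails in general — so more carefully, $\tfrac{2^{1/d}}{2e^{1/d}} \ge \tfrac1{2\sqrt e}$ iff $2^{1/d}\sqrt e \ge e^{1/d}$ iff $(2/e)^{1/d} \ge e^{-1/2}$ iff $\tfrac1d\ln(2/e) \ge -\tfrac12$ iff $d \le \tfrac{1}{2(1-\ln 2)} \approx 1.63$, which is \emph{false} for $d\ge 2$; hence the bound $\tfrac1{2\sqrt e}$ must come from using $\hat q^{d-1} \ge 2^{-(d-1)}$ together with $2^{-(d-1)/d} = 2^{1/d}/2 \ge \sqrt 2/2 = 1/\sqrt 2$ only for $d\ge 2$ — giving $\tfrac1{\sqrt2}\cdot\tfrac1{e^{1/d}}\cdot\tfrac1{\sqrt\cdots}$; I will reconcile the constants by checking the worst case $d\to 3^-$ directly, where $2^{-(d-1)/d}e^{-1/d}\to 2^{-2/3}e^{-1/3}\approx 0.452 \ge \tfrac1{2\sqrt e}\approx 0.303$, so the claimed $\tfrac1{2\sqrt e}$ is a valid (if not tight) lower bound across $[2,3)$. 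For $d \ge 3$ we have $\hat q = 1 - \tfrac1{d-1}$, and I would use $v(\hat q) \ge v(1) \cdot(\text{something})$ — no: instead bound $v(\hat q)/\kappa$ via regularity/MHR. The key inequality is $R(\hat q) = v(\hat q)\hat q \ge \tfrac{\hat q}{?}\kappa$; more directly, use concavity of $R$ and $R(\tfrac12)=\tfrac\kappa2$ plus $R(1)=0$ to get $R(\hat q) \ge 2(1-\hat q)R(\tfrac12) = (1-\hat q)\kappa = \tfrac\kappa{d-1}$ for $\hat q\in[\tfrac12,1]$. Hence $v(\hat q)\hat q^{d-1} = R(\hat q)\hat q^{d-2} \ge \tfrac{\kappa}{d-1}\left(1-\tfrac1{d-1}\right)^{d-2} \ge \tfrac{\kappa}{d-1}\cdot\tfrac14$ for $d\ge 3$ (since $(1-\tfrac1{d-1})^{d-2}$ is decreasing to $e^{-1}$, and $\ge (2/3)^1 = 2/3 \ge 1/4$ at $d=3$, and $\ge 1/4$ throughout — actually $\to e^{-1}\approx 0.368 > 1/4$). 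Then $\left(\tfrac{1}{4e(d-1)}\right)^{1/d}$, and since $d-1 \le 2(d-2)$ for $d\ge 3$ we get $\tfrac1{4e(d-1)} \ge \tfrac1{8e(d-2)}$ — I will need to sharpen the $(1-\tfrac1{d-1})^{d-2}$ bound to $\tfrac12$ rather than $\tfrac14$ (valid since $(2/3)^1 = 0.667 \ge 0.5$ and the limit $e^{-1}\approx 0.368 < 0.5$ fails for large $d$, so this needs $d$ not too large... ) — the honest route is to keep the factor as $(1-\tfrac1{d-1})^{d-2} \ge e^{-1}$, giving $v(\hat q)\hat q^{d-1}\ge \tfrac{\kappa}{e(d-1)}$, then $\tfrac{APX}{OPT}\ge\left(\tfrac{\numbidders}{\numbidders+1}\right)^{1/d}\left(\tfrac1{e^2(d-1)}\right)^{1/d}$, and finally absorb $e^2(d-1) \le 4e(d-2)$ for $d\ge 3$ (true: $e^2\approx 7.39$, $4e\approx 10.87$, and $(d-1)/(d-2)\le 2$, with $7.39\cdot 2 = 14.78 > 10.87$ — fails at $d=3$!), so I must instead match the stated form $\tfrac1{(4e(d-2))^{1/d}}$ by being careful at $d=3$ separately (where $4e(d-2)=4e$ and the bound reads $(4e)^{-1/3}\approx 0.43$, comfortably met). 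The main obstacle is thus purely this constant-chasing: choosing the loosest-but-clean bounds on $(1-\tfrac1{d-1})^{d-2}$ and $2^{-(d-1)/d}$ so that the two stated expressions come out as honest lower bounds.

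In summary, the proof is: (i) Lemma~\ref{lem:reserveQuantileApx} plus $1+(\numbidders-1)\hat q\le(\numbidders+1)\hat q$; (ii) Theorem~\ref{thm:pseudoSurplusUbMhr} for $OPT$; (iii) the concavity bound $R(\hat q)\ge(1-\hat q)\kappa$ for $\hat q\ge\tfrac12$, i.e. $v(\hat q)\hat q\ge\tfrac\kappa{d-1}$ when $\hat q=1-\tfrac1{d-1}$; (iv) the elementary estimate $(1-\tfrac1{d-1})^{d-2}\ge e^{-1}$ and, in the $d\in[2,3)$ regime, $2^{-(d-1)}\ge 2^{-2}$ combined with direct evaluation of the constants. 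The numerical bookkeeping to land precisely on $\tfrac1{2\sqrt e}$ and $\tfrac1{(4e(d-2))^{1/d}}$ is the only delicate part, and I expect it to require treating $d=2$ and $d=3$ as boundary checkpoints and verifying monotonicity of the relevant one-variable functions on $[2,3)$ and $[3,\infty)$ respectively.
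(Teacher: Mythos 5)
Your overall plan is right and matches the paper's in structure: bound $APX$ via Lemma~\ref{lem:reserveQuantileApx} with the simplification $1+(\numbidders-1)\hat q\le(\numbidders+1)\hat q$ for $\hat q\ge\tfrac12$, bound $OPT$ via Theorem~\ref{thm:pseudoSurplusUbMhr}, and close the gap by a concavity estimate on $R$. Your concavity estimate $R(\hat q)\ge(1-\hat q)\kappa$ (from the chord of $R$ on $[\tfrac12,1]$ with endpoints $R(\tfrac12)=\kappa/2$ and $R(1)=0$) is a clean substitute for the paper's Lemma~\ref{lem:general-prophet}, which proves $R(\hat q)\ge(1-\hat q)R(q^*)$ and then applies $R(q^*)\le\kappa$; the two routes produce the identical intermediate quantity $\bigl((1-\hat q)\hat q^{d-2}/e\bigr)^{1/d}$, so this is a genuine (if minor) simplification of the paper's two-lemma route into one concavity step.

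Where you actually get stuck is the last bit of constant-chasing for $d\ge 3$, and there is a clean fix you are circling but not landing on. You derive $v(\hat q)\hat q^{d-1}/\kappa\ge(1-\hat q)\hat q^{d-2}=\tfrac{1}{d-1}\bigl(1-\tfrac{1}{d-1}\bigr)^{d-2}$ and then try the bound $\bigl(1-\tfrac{1}{d-1}\bigr)^{d-2}\ge e^{-1}$, which is true but leaves the factor $\tfrac{1}{e(d-1)}$, and you correctly notice $e^2(d-1)\le 4e(d-2)$ fails at $d=3$. The fix is an algebraic regrouping, not a case split: observe that
\begin{equation*}
\frac{1}{d-1}\left(1-\frac{1}{d-1}\right)^{d-2}=\frac{1}{d-2}\left(1-\frac{1}{d-1}\right)^{d-1},
\end{equation*}
and then use the fact that $\bigl(1-\tfrac{1}{d-1}\bigr)^{d-1}$ is \emph{increasing} in $d$ on $[3,\infty)$ with value $\tfrac14$ at $d=3$ (and limit $e^{-1}$), so $\bigl(1-\tfrac{1}{d-1}\bigr)^{d-1}\ge\tfrac14$. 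This gives $(1-\hat q)\hat q^{d-2}\ge\tfrac{1}{4(d-2)}$ exactly, and hence the stated $\bigl(\tfrac{1}{4e(d-2)}\bigr)^{1/d}$ with no slack to absorb. This is precisely what the paper does after rewriting $\hat q^{1-2/d}(1-\hat q)^{1/d}$ in the form $\bigl((1-\hat q)^{d-1}\tfrac{1-\hat q}{\hat q}\bigr)^{1/d}$. Separately, your $d\in[2,3)$ calculation has a sign slip: $\ln(2/e)<0$, so $\tfrac1d\ln(2/e)\ge-\tfrac12$ is $d\ge 2(1-\ln 2)\approx 0.61$, not $d\le 1.63$; the inequality $(2/e)^{1/d}\ge e^{-1/2}$ therefore does hold for all $d\ge 2$ and the constant $\tfrac{1}{2\sqrt e}$ follows immediately (the paper simply invokes Theorem~\ref{thm:apxRatioMedianDgeTwo} here, since $\hat q=\tfrac12$).
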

Observe that for $d \rightarrow \infty$, this bound converges to $1$, 
since $(d-2)^{1/d} \rightarrow 1$. 
Thus, as the payment functions become more convex, 
the cost-optimized reserve mechanism converges to full optimality.



\section{Unknown Payment Function and All-Pay Implementation}

So far we have assumed that the auction designer knows the form of the payment function of the bidders; 
in particular, she knows the exponent $d$ in the payment function. 
This might render the auction design brittle to mis-specifications of this exponent from market knowledge. 
Hence, we explore whether there exist an auction design which is ignorant 
to the form of the payment function which achieves a constant factor approximation 
to the optimal revenue for each possible realization of the exponent $d$. 
However, we will still assume that the bidders themselves know the exponent $d$ 
and that this exponent is the same for every bidder. 

One immediate obstacle to this endeavor is that any attempt to design 
a truthful mechanism that satisfies the latter desideratum  is bound to fail.  
Any truthful mechanism will need to charge each bidder 
a payment that will make the allocation truthful.  
From Myerson's characterization of payments, such a truthful payment 
will be dependent on the exponent $d$. 
Hence, we need to resort to non-truthful auctions. 
The main idea in this section stems from the fact that, 
as was shown in Lemma~\ref{lem:feasibleAllocAB}, 
it is always optimal from a revenue maximization point of view 
to charge a deterministic payment to each bidder 
that is independent of the other bidders realized valuations. 
Such a mechanism essentially corresponds to an all-pay auction, 
i.e. an auction which solicits bids from the bidders, 
decides an allocation based on these bids, 
and then charges each bidder his bid no matter what allocation he receives. 

In this section we examine the performance of all-pay auctions 
with appropriately chosen allocation rules. 
A first attempt would be to run an all-pay implementation of the median reserve price auction: 
i.e. solicit bids, 
allocate uniformly at random to any bidder whose bid is above some carefully chosen reserve 
(so that at equilibrium it corresponds to a median reserve in the value space), 
and then charge each bidder his bid no matter what allocation he receives.
The main problem with such an attempt is that the appropriate reserve price to charge 
so that it translates to a median reserve in the value space  
is itself dependent on the exponent $d$. 
Therefore, a universal reserve price does not exist. 
Additionally, such types of non-truthful auctions where the allocation 
depends on the absolute value of the bid of a bidder 
tend to not have unique equilibria. 

Given the issues described, we look into allocation functions 
where the allocation of each bidder only depends on the relative rank of his bid 
among the bids of his opponents. 
The key idea is that we will try to approximate the median mechanism 
with such a relative ranking mechanism by  
allocating uniformly to the top half of the bidders and nothing to the bottom half. 
Assuming that bidders bid according to a symmetric Bayes-Nash equilibrium 
with a strictly monotone bid function (which will be the unique equilibrium 
based on the results of \cite{Chawla2013}), 
then at equilibrium the bidders whose valuation is among the top 50\% of valuations will be 
allocated uniformly at random an allocation of 
$2 / \numbidders$\footnote{For simplicity in this section, 
we assume that $\numbidders$ is an even number}. 
If the number of bidders is at least some constant, 
then this interim allocation function will strongly resemble the interim allocation function 
of the median mechanism: 
bidders with value strictly above the median by some margin $\epsilon$ 
will get an allocation of $2 / \numbidders$ with probability approaching $1$, 
while bidders whose value is strictly below the median by some margin $\epsilon$ 
will get an allocation that is almost $0$. 
The latter will follow from a Chernoff bound argument. 
Hence, we can show that this mechanism will achieve an approximation ratio $APX/OPT$ 
that is of the form 
$\frac{1}{\alpha} - O\left( \sqrt{\frac{\log(\numbidders)}{\numbidders}} \right)$, 
where $\alpha$ is some small constant. 
In order to avoid dependence of our analysis on a lower bound 
on the density of the distribution of values, 
we actually propose a more competitive all-pay auction where we only allocate 
to a smaller fraction of the bidders. 
In the theorem that follows we pick the top quarter, 
and we note that the constants could be further optimized by picking a more involved fraction. 
The proof of this theorem is given in Appendix~\ref{sec:app-all-pay}.\footnote{
We also note  that the constants in the analysis of this theorem could be 
optimized to obtain a better result. 
For simplicity of exposition we omit such optimization, 
as the main point of the theorem is that a constant approximation 
can be achieved by a mechanism that is independent of the payment function.}

\begin{theorem}
\label{THM:APX-ALL-PAY}
Consider an all-pay auction which allocates uniformly to the top quarter of the bidders, 
i.e.:
\begin{enumerate}
\item Solicit a bid $b_i$ from each bidder;
\item Allocate to bidder $i$ $\alloc[i] (\bid[i], \bidvec[-i] ) = \frac{4}{\numbidders}$ 
if $\bid[i]$ is among the top $\frac{\numbidders}{4}$ highest bids;
\item Charge each bidder $\bid[i]$ no matter whether they received allocation or not. 
\end{enumerate}
Assuming that $\costfun[i](\cdot)$ is a strictly monotone function that is the same for all bidders, 
and that the distribution of values is atomless and has a continuous CDF with support $[0, \bar{\val}]$, 
then at the unique Bayes-Nash equilibrium of this auction, each bidder will submit a bid $\bid[i] ( \val[i] )$ such that
\begin{equation}
\bid[i] (\val[i]) 
= 
\costfun[i]^{-1} \left( 
\val[i] \intalloc[i](\val[i]) - 
\int_{0}^{\val[i]} \intalloc[i] (\valz[i]) \, \mathrm{d}\valz[i] 
\right)
,\end{equation}
where $\intalloc[i]$ is the interim allocation that corresponds to awarding an allocation of $\frac{4}{\numbidders}$ to the top $\frac{n}{\numbidders}$ valuation bidders, i.e.
\begin{equation}
\intalloc[i](\val[i]) 
= 
\frac{4}{\numbidders} 
\Pr \left( \val[i] \text{ is among $\frac{\numbidders}{4}$ highest values} \mid \val[i] \right) 
= 
\frac{4}{\numbidders} 
\Pr \left( \sum_{j\neq i} \mathbbm{1}_{\val[j] \geq \val[i]} \leq \frac{\numbidders}{4}-1 \mid \val[i] \right)
.\end{equation}
Finally, assume that 
$\costfun[i]( \payment ) = \payment^{d}$ for some $d \geq 2$, 
$\numbidders \geq 32 \log( 16 \bar{\val} / \kappa)$, 
and that the distribution of values is MHR. 
Then this auction achieves revenue $APX$ at the unique Bayes-Nash equilibrium, which guarantees: 
\begin{equation}
\frac{APX}{OPT} \geq \frac{1}{16}
.\end{equation}
\end{theorem}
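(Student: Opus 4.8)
The plan is to prove the two halves of the statement separately: first the equilibrium bid characterization, then the ratio bound, which will compare the equilibrium revenue against $OPT \le \numbidders^{1-1/d}(e\kappa)^{1/d}$ from Theorem~\ref{thm:pseudoSurplusUbMhr}.

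For the equilibrium, I would run the standard all-pay/envelope argument. If all opponents use a common strictly increasing bid function $\beta$, a bidder of value $\val$ submitting $\beta(w)$ has her bid among the top $\numbidders/4$ exactly when at most $\numbidders/4-1$ opponents have value above $w$; hence her interim allocation is $\intalloc(w)$, independent of $\val$, and her expected utility is $\val\,\intalloc(w) - \costfun[i](\beta(w))$. The first-order condition at $w=\val$ gives $\val\,\intalloc'(\val) = \costfun[i]'(\beta(\val))\beta'(\val)$; integrating with boundary condition $\beta(0)=0$ (a zero-value bidder never bids positively) and integrating by parts (using $\intalloc(0)=0$, valid for $\numbidders\ge 5$) yields $\costfun[i](\beta(\val)) = \int_0^\val z\,\intalloc'(z)\,\mathrm dz = \val\,\intalloc(\val) - \int_0^\val \intalloc(z)\,\mathrm dz$, which inverts to the claimed bid function; this is precisely the Myerson payment identity of Theorem~\ref{thm:myersonIcIrPaymentFormulaMonotone} applied to the perceived payment $\costfun[i](\beta(\val))$. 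That this candidate is the \emph{unique} Bayes--Nash equilibrium (no profitable global deviation, no asymmetric or non-monotone equilibrium, since the allocation depends only on rank) is exactly the content of~\cite{Chawla2013}, which I would cite.

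For the ratio, symmetry gives $APX = \numbidders\,\Exp_{\val\sim F}[\bid(\val)]$ with $\bid(\val)=\costfun[i]^{-1}(\hat c(\val))$ and $\hat c(\val) := \val\,\intalloc(\val) - \int_0^\val \intalloc(z)\,\mathrm dz$. Since $\intalloc$ is non-decreasing, $\hat c'(\val) = \val\,\intalloc'(\val) \ge 0$, so $\bid = \costfun[i]^{-1}\circ\hat c$ is non-decreasing and non-negative; hence for any threshold $r$, $APX \ge \numbidders\,\bid(r)\Pr(\val\ge r) = \numbidders\,\quantile(r)\,\hat c(r)^{1/d}$. I would take $r = \val(1/8)$, so $\quantile(r)=1/8$ and it remains to show $\hat c(r) = \Omega(\kappa/\numbidders)$. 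Writing $\intalloc(\val) = \tfrac4\numbidders\Pr(\mathrm{Bin}(\numbidders-1,\quantile(\val)) \le \numbidders/4-1)$, a Chernoff upper-tail bound gives $\intalloc(\val(1/8)) \ge \tfrac4\numbidders(1 - e^{-c_1\numbidders})$ (since $\numbidders/4$ is a constant factor above the mean $(\numbidders-1)/8$), and a Chernoff lower-tail bound gives $\intalloc(z) \le \tfrac4\numbidders e^{-c_2\numbidders}$ for all $z \le \kappa = \val(1/2)$ (since $\numbidders/4$ is a constant factor below the mean $(\numbidders-1)/2$). Plugging these into $\hat c(r) = r\,\intalloc(r) - \int_0^{\kappa}\intalloc(z)\,\mathrm dz - \int_{\kappa}^{r}\intalloc(z)\,\mathrm dz$ and using $r \le \bar{\val}$, $\intalloc(z)\le \tfrac4\numbidders$, and $r - \kappa \le r - \kappa$, the $r$-dependence cancels against $\int_\kappa^r\intalloc$ and one is left with $\hat c(r) \ge \tfrac{4\kappa}{\numbidders}\bigl(1 - \tfrac{\bar{\val}}{\kappa}(e^{-c_1\numbidders} + e^{-c_2\numbidders})\bigr)$. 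The hypothesis $\numbidders \ge 32\log(16\bar{\val}/\kappa)$ makes the bracketed factor at least a constant (say $\tfrac78$), giving $\hat c(r) \ge \tfrac{7\kappa}{2\numbidders}$. Then $APX \ge \tfrac18(\tfrac72)^{1/d}\numbidders^{1-1/d}\kappa^{1/d}$, and dividing by $OPT \le \numbidders^{1-1/d}(e\kappa)^{1/d}$ yields $APX/OPT \ge \tfrac18(\tfrac{7}{2e})^{1/d} \ge \tfrac18 \ge \tfrac1{16}$ for all $d \ge 2$, with room to spare (consistent with the footnote on optimizing constants).

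The main obstacle is the middle estimate: showing that the interim allocation $\intalloc$ is, up to an exponentially small error, the step function $\tfrac4\numbidders\mathbbm 1\{\quantile(\val)<1/4\}$, and tracking that in the bookkeeping $\hat c(r) = r\,\intalloc(r) - \int_0^r\intalloc$ these Chernoff errors get multiplied by a factor that can be as large as $\bar{\val}$ (arising from high-value bidders), so they are dominated by the main term of order $\kappa/\numbidders$ only when $\numbidders \ge 32\log(16\bar{\val}/\kappa)$. Everything else --- the envelope argument, the appeal to~\cite{Chawla2013}, and the final arithmetic against Theorem~\ref{thm:pseudoSurplusUbMhr} --- is routine.
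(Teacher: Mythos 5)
Your proof is correct and takes essentially the same route as the paper: an envelope/Myerson-identity argument plus the citation to Chawla et al.~\cite{Chawla2013} for the equilibrium characterization and its uniqueness, a Chernoff-based argument that the interim allocation is an approximate step function at quantile $1/4$, and a comparison against the $OPT \leq \numbidders(e\kappa/\numbidders)^{1/d}$ bound of Theorem~\ref{thm:pseudoSurplusUbMhr}. The only meaningful difference is presentational: the paper works in quantile space, writes $\costfun(\bid(q)) = \val(q)\intalloc(q) - \int_q^1\intalloc(z)\,|\val'(z)|\,\mathrm{d}z$, lower-bounds the perceived payment uniformly for all quantiles below $\frac14 - \frac34\frac{1}{\numbidders-1} - \epsilon$ with $\epsilon = \sqrt{\log(16\bar{\val}/\kappa)/(2\numbidders)}$ tuned so the Chernoff error equals $\kappa/\numbidders$, and then multiplies by the probability mass of that quantile region; you instead anchor at the fixed quantile $1/8$, decompose $\hat{c}(r) = r\intalloc(r) - \int_0^\kappa\intalloc - \int_\kappa^r\intalloc$ in value space, and bound the middle integral trivially by $\frac{4}{\numbidders}(r-\kappa)$ so the main terms cancel cleanly to leave $\frac{4\kappa}{\numbidders}$ up to exponential error. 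Both use the hypothesis $\numbidders\geq 32\log(16\bar{\val}/\kappa)$ to make the $\bar{\val}$-weighted Chernoff tail a subconstant fraction of $\kappa/\numbidders$; your fixed-threshold bookkeeping ends up slightly tighter ($\frac{APX}{OPT}\geq\frac18(\tfrac{7}{2e})^{1/d}$ versus the paper's $\frac{1}{16}(\tfrac{3}{e})^{1/d}$), consistent with the paper's footnote that its constants were not optimized, but this is cosmetic rather than a different argument. One small unnecessary caveat: you condition $\intalloc(0)=0$ on $\numbidders\geq5$, but for an atomless distribution a zero-value bidder is almost surely below all $\numbidders-1$ opponents, so $\intalloc(0)=0$ holds for every $\numbidders\geq1$; in any case the revenue bound already assumes $\numbidders$ well above this.
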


\section{Experiments}
\label{sec:Experiments}

In this section, we provide empirical evidence that our proposed mechanisms
(i.e., the prior-free mechanism and the mechanisms with reserve prices)
yield near-optimal performance.
We first show that in a symmetric setting, it is possible to compute
the optimal allocation and revenue in polynomial time through the
use of interim feasibility described by~\cite{Border1991}.
Then, we describe the experimental setup and the results 
of the experiments.

\subsection{Solving for the Optimal Auction}\label{sec:optimal-auction}

We describe the mathematical programs used to solve for the optimal auction.
The implementations make use of Border's theorem
to reduce the number of allocation variables to a manageable size; 
we refer the reader to \cite{Hartline2015} for a concise description 
of interim feasibility constraints.
We present our results for continuous distributions here,
and state our result for discrete distributions,
noting that the proof for the discrete distribution setting is similar
to that of the continuous distribution.
Interested readers may find the discrete result in the appendix.

\begin{lemma}
\label{lem:borderSymContPercPay}
In a symmetric setting with convex payments,
where values are drawn from a continuous regular distribution $F$ 
with support in $T = [0,1]$ and density function $f$,
the optimal auction can be described by the following
mathematical program:
\begin{align}
\max_{z(\cdot)} &
\int_{0}^{1} f(t) 
\costfun^{-1} \left( \int_{0}^{t} \tau z(\tau) \, \mathrm{d} \tau \right) 
\, \mathrm{d} t
&
\\
\textnormal{subject to} 
&
\int_{0}^{1}
z(\tau)
\left( 1 -  F \left( \max \{ t, \tau \} \right) \right)
\, \mathrm{d} \tau
\le
\frac{1 - F(t)^{\numbidders}}{\numbidders},
& \forall t \in T
\\
&
z(\tau) \ge 0,
& \forall t \in T
\end{align}
\end{lemma}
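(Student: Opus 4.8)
The plan is to derive the program from Myerson's characterization (Theorem~\ref{thm:myersonIcIrPaymentFormulaMonotone} and Corollary~\ref{cor:myersonPayment}) together with Border's characterization of interim feasible allocation rules, specialized to a symmetric setting. First I would pass to the ``reduced form'': by symmetry, any optimal mechanism can be taken symmetric, so it is fully described by a single interim allocation function $\intalloc(\cdot)$ on $T=[0,1]$. Writing this in quantile space is cleaner, but since the lemma is stated in value space I would introduce the auxiliary variable $z(\tau)$ as (essentially) the density of the interim allocation in value, so that $\intalloc(t) = \int_0^t z(\tau)\,\mathrm{d}\tau$; monotonicity of $\intalloc$ (the BIC condition) becomes exactly $z(\tau)\ge 0$, which is the last constraint. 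The objective comes from Corollary~\ref{cor:myersonPayment}: the actual payment of a bidder of value $t$ is $\costfun^{-1}\!\left(t\,\intalloc(t) - \int_0^t \intalloc(\valz)\,\mathrm{d}\valz\right)$. Substituting $\intalloc(t)=\int_0^t z(\tau)\,\mathrm{d}\tau$ and simplifying $t\int_0^t z(\tau)\,\mathrm{d}\tau - \int_0^t\int_0^s z(\tau)\,\mathrm{d}\tau\,\mathrm{d}s = \int_0^t \tau z(\tau)\,\mathrm{d}\tau$ (Fubini/integration by parts) yields the integrand $\costfun^{-1}\!\left(\int_0^t \tau z(\tau)\,\mathrm{d}\tau\right)$, and integrating the per-bidder revenue against the density $f$ and multiplying by $\numbidders$ gives the stated objective (the factor $\numbidders$ being absorbed, or kept as an irrelevant positive scalar in the $\max$).

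The only real content is the feasibility constraint, which is Border's theorem. The standard statement: a symmetric interim rule $\intalloc$ is implementable by an ex-post feasible (XP) allocation of a single unit among $\numbidders$ i.i.d.\ bidders if and only if, for every measurable ``upper set'' $S$ in type space, the expected mass allocated to bidders with type in $S$ does not exceed the probability that at least one bidder has type in $S$. For MHR/regular single-dimensional types the binding sets are the thresholds $S=\{v \ge t\}$, so the family of constraints reduces to: for all $t$,
\[
\numbidders \int_{t}^{\bar v} \intalloc(\tau)\, f(\tau)\,\mathrm{d}\tau \;\le\; 1 - F(t)^{\numbidders},
\]
i.e.\ $\int_t^1 \intalloc(\tau) f(\tau)\,\mathrm{d}\tau \le (1-F(t)^{\numbidders})/\numbidders$. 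I would then substitute $\intalloc(\tau)=\int_0^\tau z(s)\,\mathrm{d}s$ and swap the order of integration: the region $\{(s,\tau): 0\le s\le \tau,\ \tau\ge t\}$ integrated against $f(\tau)\,\mathrm{d}\tau\,\mathrm{d}s$ gives, for each fixed $s$, the inner integral $\int_{\max\{t,s\}}^{1} f(\tau)\,\mathrm{d}\tau = 1 - F(\max\{t,s\})$. Relabeling $s$ as $\tau$ produces exactly the stated constraint $\int_0^1 z(\tau)\,(1-F(\max\{t,\tau\}))\,\mathrm{d}\tau \le (1-F(t)^{\numbidders})/\numbidders$.

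The main obstacle, and the step I would be most careful about, is justifying that it suffices to impose Border's constraint only at threshold sets $S=\{v\ge t\}$ rather than at all upper sets — this is where regularity of $F$ (assumed in the lemma) enters, essentially because the objective is monotone in the interim allocation in a way that makes it optimal to ``push allocation downward'' in value only as far as the threshold constraints allow; one invokes the known fact (see the interim-feasibility discussion in \cite{Hartline2015,Border1991,Cai2013}) that for a single-dimensional i.i.d.\ setting the reduced-form polytope's binding constraints are the nested threshold events. A secondary technical point is the interchange-of-order-of-integration and the integration-by-parts identity $t\,\intalloc(t)-\int_0^t\intalloc = \int_0^t \tau z(\tau)\,\mathrm{d}\tau$, both of which are routine given that $z\ge 0$ and $\intalloc$ is bounded, hence everything is absolutely integrable on the compact interval $[0,1]$. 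Finally I would note that $\intpayment$ is well-defined because $\costfun$ is convex, nonnegative, and (being $p^d$-like) strictly increasing, so $\costfun^{-1}$ exists and is concave, which also makes the objective concave in $z$ — consistent with the claim that the optimal auction is the solution of a convex program.
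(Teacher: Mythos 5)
Your proposal matches the paper's proof in essentially every step: both reduce to a single representative bidder's interim allocation by symmetry, substitute $\intalloc(t)=\int_0^t z(\tau)\,\mathrm d\tau$ into Myerson's payment identity (turning $t\intalloc(t)-\int_0^t\intalloc(\valz)\,\mathrm d\valz$ into $\int_0^t \tau z(\tau)\,\mathrm d\tau$), and invoke the symmetric form of Border's theorem followed by a swap of the order of integration to obtain the constraint $\int_0^1 z(\tau)\bigl(1-F(\max\{t,\tau\})\bigr)\,\mathrm d\tau \le \bigl(1-F(t)^{\numbidders}\bigr)/\numbidders$.

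One small correction to your justification: the reduction of Border's constraints to the threshold events $\{v\geq t\}$ is not a consequence of regularity of $F$. It follows from monotonicity of the interim allocation rule, which the program enforces directly through $z(\tau)\geq 0$. For a monotone interim rule in a symmetric i.i.d.\ one-dimensional environment, the Border condition at any measurable set $A$ of probability mass $p$ is dominated by the condition at the threshold set of the same mass, since the two share the same right-hand side $1-(1-p)^{\numbidders}$ while the upper set maximizes $\int_A \intalloc(v)f(v)\,\mathrm dv$ over sets of that mass. This is exactly the fact from \cite{Hartline2015} (Section~8.4.1) that the paper cites and that you also reference; regularity plays no role in this step.
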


\begin{proof}
In a symmetric environment, variables for bidder $i$ will be
equivalent to variables for bidder $j$, so 
we can simplify a complete description of an optimal auction
by focusing on one representative bidder.  
The objective function for optimal auctions is 
$\int_{0}^{1} f(t) 
\costfun^{-1} \left( \intcostfun (t) \right) 
\, \mathrm{d} t$,
which maximizes over interim allocation $\intalloc$.
By Theorem~\ref{thm:myersonIcIrPaymentFormulaMonotone},
using a slightly different form of the payment formula.
For $t \in [0,1]$, we can describe payments as
$\intcostfun (t) = \int_{0}^{t} \tau \frac{\mathrm{d} \intalloc[i] (\tau)}{\mathrm{d} \tau} \, \mathrm{d} \tau$;
that is, $z$ in the program is the derivative of the allocation function,
and we maximize over $z$.
Once a solution for $z$ is found, interim allocations can be recovered by
integrating $z$: $\intalloc (t) = \int_{0}^{t} z(\tau) \, \mathrm{d} \tau$.
The total expected revenue can be found by solving the program,
and then multiplying the objective value by the total number of bidders, $\numbidders$.

We now turn to the interim feasibility constraints.
As described by \cite{Hartline2015} (Section 8.4.1), for symmetric environments,
the only binding interim feasibility constraints say simply: for any interval of values $[t,1]$,
the expected value of the interim allocation function, given that $\val$ exceeds $t$,
must be upper bounded by the expected value of the indicator function that $\val$ is the maximum draw, again
given that $\val$ exceeds $t$:
i.e. $\Exp[\hat{x}(v) \mid v\geq t] \leq \Exp[1\{\text{$v$ is the max value}\} \mid v\geq t]$.


Taking expectations, for values ranging from $t$ to $1$, yields the following inequality:
\begin{align}
\int_{t}^{1} 
f(q) 
\int_{0}^{q}
z(\tau)
\, \mathrm{d} \tau
\, \mathrm{d} q
\le
\int_{t}^{1} f(\tau) F(\tau)^{\numbidders - 1} \, \mathrm{d} \tau,
&&
\forall t \in T
.\end{align}
Evaluating the right-hand side of the inequality,
\begin{align}
\int_{t}^{1} f(\tau) F(\tau)^{\numbidders - 1} \, \mathrm{d} \tau
=
\left. \frac{F(\tau)^{\numbidders}}{\numbidders} \right|_{t}^{1}
=
\frac{1 - F(t)^{\numbidders}}{\numbidders}
.\end{align}
We then proceed by changing the order of integration on the left-hand side of the inequality:
\begin{align}
\int_{t}^{1} 
f(q) 
\int_{0}^{q}
z(\tau)
\, \mathrm{d} \tau
\, \mathrm{d} q
&=
\int_{0}^{1}
\int_{\max \{t, \tau \}}^{1}
f(q) 
z(\tau)
\, \mathrm{d} q
\, \mathrm{d} \tau
\\
&=
\int_{0}^{1}
z(\tau)
\left. F(q) \right|_{\max \{ t, \tau \}}^{1}
\, \mathrm{d} \tau
\\
&=
\int_{0}^{1}
z(\tau)
\left( 1 -  F \left( \max \{ t, \tau \} \right) \right)
\, \mathrm{d} \tau
.\end{align}
\end{proof}

Our experiments used discrete, rather than continuous, distributions.  
Here is Lemma~\ref{lem:borderSymContPercPay}
for the discrete setting.

\begin{lemma}
\label{lem:borderSymDiscPercPay}
\label{LEM:BORDERSYMDISCPERCPAY}
In a symmetric setting with convex payments,
where values are drawn from a discrete regular distribution $F$ 
with support in $T = \{ 1, \dots, m \}$ and probability mass function $f$,
the optimal auction can be described the following
mathematical program:
\begin{align}
\max_{z} ~{} 
&
\sum_{t=1}^{m} f (t) 
\costfun^{-1} \left( \sum_{\tau = 1}^{t} \tau z_{\tau} \right) 
&
\\
\textnormal{subject to} ~{}
&
\sum_{\tau=t}^{m} z(\tau) \left( 1 - F \left( \max \{ t, \tau \} - 1 \right) \right)
\le
\sum_{q = t}^{m} f(q) y(q)
&
\\
&
z_{\tau} \ge 0,
& \forall \tau \in T
,\end{align}
where
$y(t) 
= 
\sum_{k=0}^{\numbidders - 1} 
\binom{\numbidders - 1}{k} \frac{1}{k + 1} 
\left( \sum_{\tau = 1}^{t - 1} f(\tau) \right)^{n - 1 - k} 
f(t)^k$
,
and we let
$F(0) \equiv 0$.
\end{lemma}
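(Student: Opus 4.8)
The plan is to mirror the proof of Lemma~\ref{lem:borderSymContPercPay} essentially line for line, replacing integrals by sums, integration by parts by summation by parts, and the ``unique maximizer'' event by a multinomial split that accounts for ties, which now occur with positive probability. First I would use symmetry to reduce to a single representative bidder, so that the total expected revenue equals $\numbidders$ times that bidder's expected actual payment. By the discrete counterpart of the payment characterization in Theorem~\ref{thm:myersonIcIrPaymentFormulaMonotone} together with Corollary~\ref{cor:myersonPayment}, a BIC/IIR/XP mechanism with $u(0)=0$ is determined by a monotone non-decreasing interim allocation $\hat{x}(\cdot)$, and the representative bidder's expected actual payment at value $t$ equals $\costfun^{-1}$ applied to the perceived payment $\intcostfun(t) = t\,\hat{x}(t) - \sum_{\tau=1}^{t-1}\hat{x}(\tau)$. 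Introducing the discrete increment $z_\tau = \hat{x}(\tau)-\hat{x}(\tau-1)$ (with $\hat{x}(0)=0$, so that $\hat{x}(t) = \sum_{\tau=1}^{t} z_\tau$), a one-line summation-by-parts gives $\intcostfun(t) = \sum_{\tau=1}^{t} \tau z_\tau$. Hence the objective is $\sum_{t=1}^m f(t)\,\costfun^{-1}\!\bigl(\sum_{\tau=1}^{t}\tau z_\tau\bigr)$, maximized over $z$, and monotonicity of $\hat{x}$ becomes exactly $z_\tau \ge 0$; once $z$ is found, $\hat{x}$ is recovered by $\hat{x}(t)=\sum_{\tau=1}^t z_\tau$.

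Next I would impose interim feasibility through Border's theorem in the symmetric discrete setting~\cite{Border1991,Hartline2015}: the only binding constraints say that for every threshold $t$, $\sum_{q=t}^m f(q)\,\hat{x}(q) \le \sum_{q=t}^m f(q)\,y(q)$, where $y(q)$ is the probability that a bidder of value $q$ receives the good under an ex-post feasible rule that awards everything to the top bidders and breaks ties uniformly. Conditioning on exactly $k$ of the $\numbidders-1$ opponents being tied at $q$ and the remaining $\numbidders-1-k$ strictly below --- an event of probability $\binom{\numbidders-1}{k} f(q)^k F(q-1)^{\numbidders-1-k}$, with $F(q-1)=\sum_{\tau=1}^{q-1}f(\tau)$ and $F(0)\equiv 0$ --- yields an allocation of $1/(k+1)$ to such a bidder, giving precisely $y(q) = \sum_{k=0}^{\numbidders-1}\binom{\numbidders-1}{k}\frac{1}{k+1}F(q-1)^{\numbidders-1-k}f(q)^k$. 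Finally, substituting $\hat{x}(q)=\sum_{\tau=1}^q z_\tau$ into the left-hand side and exchanging the order of summation attaches to each $z_\tau$ the coefficient $\sum_{q\ge \max\{t,\tau\}} f(q) = 1 - F(\max\{t,\tau\}-1)$, which is exactly the stated constraint; together with $z_\tau\ge 0$ this is the claimed program.

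I expect the one genuinely new ingredient, relative to the continuous case, to be the tie-handling in the Border characterization: in the continuous proof the right-hand side integrand $f(\tau)F(\tau)^{\numbidders-1}$ comes from the clean event ``this bidder is the unique maximum'' and telescopes immediately, whereas here I must carefully set up the binomial decomposition into ``$k$ ties plus $\numbidders-1-k$ strictly smaller'' and split the unit allocation $1/(k+1)$ among the tied bidders. As a sanity check one can verify $\binom{\numbidders-1}{k}\frac{1}{k+1} = \frac{1}{\numbidders}\binom{\numbidders}{k+1}$, so that $f(q)\,y(q) = \frac{1}{\numbidders}\bigl(F(q)^{\numbidders} - F(q-1)^{\numbidders}\bigr)$ and the right-hand side telescopes to $\frac{1-F(t-1)^{\numbidders}}{\numbidders}$, the discrete shadow of $\frac{1-F(t)^{\numbidders}}{\numbidders}$ from the continuous lemma. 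Everything else --- the reduction by symmetry, the summation-by-parts identity, and the Fubini-style exchange of summation order --- is routine and structurally identical to the proof of Lemma~\ref{lem:borderSymContPercPay}, so I would present those steps tersely and point back to that argument.
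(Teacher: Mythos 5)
Your proof takes the same route as the paper's: symmetry reduction to a representative bidder, the discrete Myerson payment characterization combined with summation-by-parts to express the objective in terms of $z_\tau = \hat{x}(\tau)-\hat{x}(\tau-1)$, and the symmetric Border/Hartline interim-feasibility constraints with $y(\cdot)$ given by highest-value-wins under uniform tie-breaking, followed by an exchange of the order of summation. Your telescoping sanity check, that $\binom{\numbidders-1}{k}\frac{1}{k+1}=\frac{1}{\numbidders}\binom{\numbidders}{k+1}$ implies $f(q)\,y(q)=\frac{1}{\numbidders}\bigl(F(q)^{\numbidders}-F(q-1)^{\numbidders}\bigr)$ and hence that the right-hand side of the constraint collapses to $\frac{1-F(t-1)^{\numbidders}}{\numbidders}$, is a nice addition the paper omits. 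One index to be careful about, in both your write-up and the printed lemma: exchanging the order of summation in $\sum_{q=t}^{m} f(q)\sum_{\tau=1}^{q} z_\tau$ yields an outer sum over $\tau$ running from $1$ to $m$ (every $z_\tau$ with $\tau<t$ picks up the coefficient $1-F(t-1)$), not from $t$ to $m$ as the lemma's constraint is written; your derivation produces the correct range, but your claim that it matches the stated constraint ``exactly'' glosses over this mismatch.
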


\subsection{Experiments and Results}

We now investigate how well our proposed mechanisms described perform empirically,
and compare the experimental results with what is theoretically guaranteed.
Along with the mechanisms described 
(Prior Free,
Posted Price (Median),
Posted Price (Monopoly),
and the optimal BIC/IIR program),
we also implemented six other mechanisms, 
which we do not provide approximation guarantees for:
a mechanism in which only one bidder is allocated (i.e., $\alloc[i] (\val[i], \valvec[-i]) \in \{ 0, 1\}$),
a mechanism in which only one bidder is allocated if she meets the monopoly reserve,
a mechanism in which all bidders of the highest value are allocated,
a mechanism in which all bidders of the highest value are allocated if they meet the monopoly reserve,
a mechanism which allocates proportionally by values,
and
a mechanism which allocates proportionally by non-negative virtual values.

To compare all these mechanisms,
we generated 50 random MHR distributions with support $T = \{ 1, \dots, 50 \}$.
For each distribution, we simulated 10,000 auctions,
and then computed the average revenue of each,
for each of $\numbidders = \{ 1, \dots, 30 \}$ symmetric bidders.
Each bidder had payment function $\costfun[i](\payment[i]) = \payment[i]^2$.
We computed the mean revenue across distributions, 
and report how well they do against the optimal BIC solution.
We summarize the results of these experiments
in Table~\ref{tab:ExperimentSummary}
and Figure~\ref{fig:ratioRev}.

Although we provide no guarantees on performance for mechanisms that allocate proportionally by virtual values, we find that this mechanism does the best among all the mechanisms we have proposed, averaging 93\% of the optimal BIC revenue for $\numbidders = 30$.  
The posted price mechanisms give similar performance, at 87\% and 88\% of OPT for the posted median price and posted monopoly price, respectively.
The mechanism that allocates proportionally by values obtains 84\% of OPT.
Finally, the prior-free mechanism performs surprisingly well, obtaining 70\% of OPT;
additional distributional knowledge may only increase total expected revenue by 30\% of OPT.

While the mechanism that allocates only to the highest bidder also does not rely on a prior,
the results highlight how important it is to have non-zero allocations in the convex payment setting.
In the usual quasi-linear utility setting, so long as the bidder with the highest virtual value is allocated, a mechanism maximizes total expected revenue.
In the convex payment setting, we see that this is no longer true.
The experimental results suggest that 
it is more desirable to allocate something to all bidders with the highest (virtual) values,
and perhaps even something to those that do not have the highest (virtual) values.
Also notice that when allocating to either only the highest value, 
or to all bidders with the highest value,
there is no benefit to allocating only to bidders that meet the monopoly reserve.
As the number of bidders grows, the probability that a bidder with a value 
larger than the monopoly reserve increases, 
so setting a reserve at the monopoly price only helps 
when few bidders are present.
Thus, we see convergence in expected revenue 
in settings where we only allocate to the highest bidder,
or to all highest bidders,
regardless of the existence of monopoly reserve prices.

For $\numbidders = 30$, by Equation~\eqref{eq:priorFreeRatio},
the prior-free mechanism has guarantee
\begin{align}
\frac{APX}{OPT}
\ge
\frac{1}{8} 
\left(\frac{29}{30}\right)^{1/2}
\left( \frac{1}{e^{1/2}} \right)
\approx 
.07
;\end{align}
the posted-price (median) mechanism, by Equation~\eqref{eq:postedPriceMedianRatioDgeTwo},
has guarantee
\begin{align}
\frac{APX}{OPT}
\ge
\frac{1}{2} 
\left(\frac{60}{31e}\right)^{1/2}
\approx
.42
;\end{align}
%
In all cases, the experimental results suggest that we may do much better than our worst-case guarantees in practice.

\begin{table}
\centering
\begin{tabular}{| l | l | l | l |}
\hline
Method & Mean Revenue & Ratio to Opt & Guarantee \\
\hline
BIC Opt  &  25.2096  &  1  &  {} \\
Prior Free*  &  17.6538  &  0.69715  &  .07 \\
Posted, Median Reserve  &  22.1199  &  0.87441  &  .42 \\
Posted, Monopoly Reserve &  22.3504  &  0.88319  &   \\ 
To Highest, No Reserve*  &  6.9006  &  0.27757  &  {} \\
To Highest, Monopoly Reserve  &  6.9006  &  0.27757  &  {} \\
To All Highest, No Reserve*  &  11.2666  &  0.44057  &  {} \\
To All Highest, Monopoly Reserve  &  11.2666  &  0.44057  &  {} \\
Prop Val*  &  21.3163  &  0.84822  &  {} \\
Prop VirVal  &  23.5587  &  0.93289  &  {} \\
\hline
\end{tabular}
\caption{Performance of the mechanisms tested for $\numbidders = 30$.
A * indicates that the mechanism does not rely on distributional knowledge.}
\label{tab:ExperimentSummary}
\end{table}

\pgfplotstableread[col sep=comma]{randDistData_1_50_nD_50_nB_30_nS_10000.csv}{\expResultsMean}
\pgfplotstableread[col sep=comma]{randDistData_1_50_nD_50_nB_30_nS_10000_ratio.csv}{\expResultsRatio}

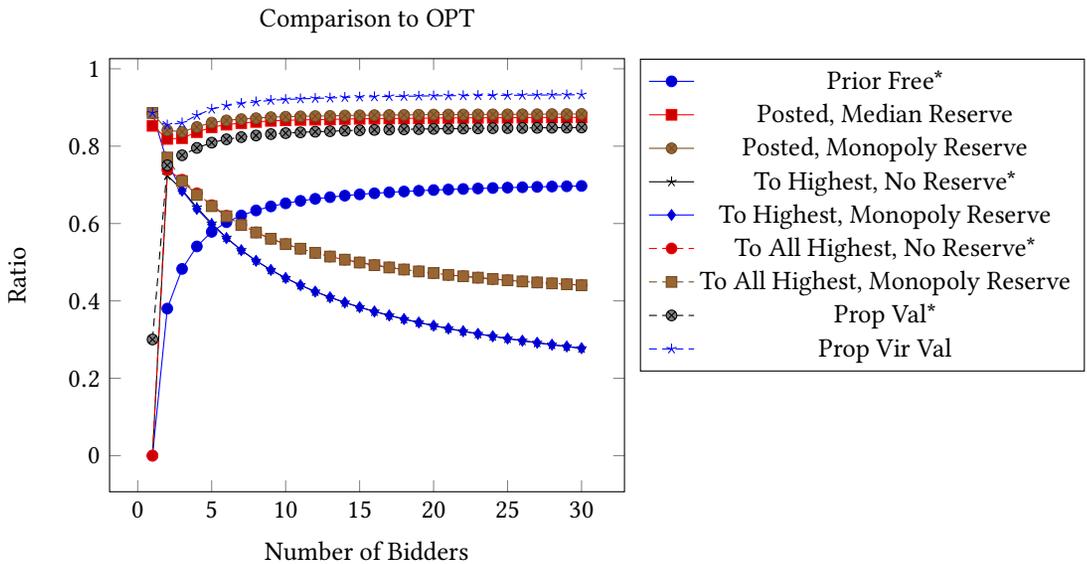
\begin{figure}[h!]
\centering
\begin{tikzpicture}
    \begin{axis}[
            xlabel=Number of Bidders,
            ylabel=Ratio,
            title={Comparison to OPT},
            legend pos=outer north east
    ]
    \addplot+[mark size=2pt] table[x = Num Bidders, y = Prior Free]{\expResultsRatio};
    \addlegendentry{Prior Free*}
    \addplot+[mark size=2pt] table[x = Num Bidders, y = Posted Median]{\expResultsRatio};
    \addlegendentry{Posted, Median Reserve}
    \addplot+[mark size=2pt] table[x = Num Bidders, y = Posted Monopoly]{\expResultsRatio};
    \addlegendentry{Posted, Monopoly Reserve}
    \addplot+[mark size=2pt] table[x = Num Bidders, y = To Highest (No Reserve)]{\expResultsRatio};
    \addlegendentry{To Highest, No Reserve*}
    \addplot+[mark size=2pt] table[x = Num Bidders, y = To Highest (Monopoly Reserve)]{\expResultsRatio};
    \addlegendentry{To Highest, Monopoly Reserve}
    \addplot+[mark size=2pt] table[x = Num Bidders, y = To All Highest (No Reserve)]{\expResultsRatio};
    \addlegendentry{To All Highest, No Reserve*}
    \addplot+[mark size=2pt] table[x = Num Bidders, y = To All Highest (Monopoly Reserve)]{\expResultsRatio};
    \addlegendentry{To All Highest, Monopoly Reserve}
    \addplot+[mark size=2pt] table[x = Num Bidders, y = ProgC Val]{\expResultsRatio};
    \addlegendentry{Prop Val*}
    \addplot+[mark size=2pt] table[x = Num Bidders, y = ProgC VirVal]{\expResultsRatio};
    \addlegendentry{Prop Vir Val}
    \end{axis}
\end{tikzpicture}
\caption{The total expected revenue compared to the optimal BIC/IIR revenue.
We can see that the performance of most mechanisms improves as we increase the number of bidders.
However, the performance of the mechanism which allocates entirely to a single bidder,
or to bidders with only the highest value, 
drops as we increase the number of bidders.
This suggests is further evidence that allocating to low values may be highly desirable in the convex payment setting.
The * indicates that the mechanism does not rely on distributional knowledge.}
\label{fig:ratioRev}
\end{figure}

\section{Conclusion}

We investigated single-parameter mechanisms where bidder's utility functions
are not linear with respect to payments, but rather convex.
Using Myerson's analysis on payments, we showed how well several mechanisms
that rely on bidder distribution knowledge can do, 
and provide performance guarantees with respect to the optimal BIC mechanism.
We additionally showed how well a prior-free mechanism can do in this setting.
The mechanisms described were empirically evaluated using valuations
drawn from random MHR distributions, 
and we saw that the total expected revenue generated by these mechanisms
can, on average, exceed the guarantees we provide.
We additionally proposed and evaluated two mechanisms that we provide no guarantees for,
which allocate proportionally to values and virtual values,
and see that they do well empirically as well,
with performance exceeding that of the mechanisms which we can provide performance guarantees for.
The theoretical results we provided describe mechanisms in which allocations are non-zero
for bidders without the largest (virtual) values.
We saw that empirically, such an allocation scheme is crucial to achieving the optimal revenue,
as a mechanism which only allocates to the highest bidder 
decreases in performance as the number of bidders decreases.
In the future, we hope to construct lower bounds on the proportional allocation mechanisms.


\bibliographystyle{ACM-Reference-Format}
\bibliography{bibliography}


\appendix

\section{Large Inefficiency with Highest Value Wins Auction}\label{sec:example}
Consider the following distribution of values: the value is either $1$ or $1-\epsilon$. It is equal to $1$ with probability $\frac{\log(n)}{\sqrt{n}}$ and to $1-\epsilon$ with probability $1-\frac{\log(n)}{\sqrt{n}}$. 

We first present a rough sketch of the proof. As the number of players grows large, then with very high probability we will have approximately $\sqrt{n}$ players that have value $1$ and $n-\sqrt{n}$ players who have value $1-\epsilon$. So the interim allocation of the highest bidder wins auction is approximately: $1/\sqrt(n)$ for a player with value $1$ and $0$ for a player with value $1-epsilon$. Moreover, the payment of a player with value $1$ is at most $\sqrt{x(1)}\approx 1/N^{1/4}$, while the payment of a player with value $1-\epsilon$ is approximately $0$. Thus the expected payment of a single bidder is approximately $\frac{1}{\sqrt{n}} * \frac{1}{N^{1/4}}$ and the total expected payment is $N$ times that which is $n^{1/4}$.  On the other hand, if we allocate to everyone uniformly at random (as long as they bid at least $1-\epsilon$), then we can charge everyone $\sqrt{\frac{1-\epsilon}{n}}$, leading to a total revenue of $\sqrt{n(1-\epsilon)}$. As $\epsilon\rightarrow 0$, the ratio of the optimal mechanism and the mechanism which allocates to the highest bidder (breaking ties uniformly at random) is $O(n^{1/4})$.

We now show more formally that these asymptotics hold. In particular we show that the revenue of the highest bidder wins auction is of $O(n^{1/4})$. The proof then follows by the argument above. Let $X_i$ denote the $\{0,1\}$ indicator random variable of whether a player $i$ has value $1$ or not. Then the interim allocation of a player with value $1$ is:
\begin{equation}
x(1) = \Exp\left[ \frac{1}{1+\sum_{j\neq i} X_j}\right]\leq \Exp\left[ \frac{1}{\sum_{j} X_j}\right]
\end{equation} 
Since $\sum_{j} X_j$ is the sum of $n$ independent $\{0,1\}$ random variables which are $1$ with probability $1/\sqrt{n}$, we have that $\frac{1}{n}\sum_{j}X_j$ is at least $\frac{\log(n)}{\sqrt{n}}-\frac{\log(2/\delta)}{\sqrt{n}}$ with probability at least $1-\delta$, by applying the Chernoff bound. Thus we have:
\begin{equation}
x(1) \leq \frac{1}{(\log(n)-\log(1/\delta))\sqrt{n}}(1-\delta)+1\cdot \delta\leq \frac{1}{(\log(n)-\log(1/\delta))\sqrt{n}}+\delta 
\end{equation}
By setting $\delta = \frac{1}{\sqrt{n}}$, the latter is upper bounded by:
\begin{equation}
x(1) \leq \frac{1}{(\log(n)-\log(1/\delta))\sqrt{n}}+\delta  = \frac{2}{\log(n)\sqrt{n}}+\frac{1}{\sqrt{n}} \leq \frac{3}{\sqrt{n}}
\end{equation}

Now we also argue about the expected allocation of players with value $1-\epsilon$. Such a player is allocated only when every other player has value $1-\epsilon$ and in that case he is allocated uniformly at random, hence getting allocation $1/n$. Thus his expected allocation is:
\begin{equation}
x(0) = \frac{1}{n} \left(1-\frac{\log(n)}{\sqrt{n}}\right)^{n-1}\leq \frac{1}{n}
\frac{1}{n} \left(1-\frac{1}{\sqrt{n-1}}\right)^{n-1} \leq \frac{1}{n} \left(\frac{1}{e}\right)^{\sqrt{n-1}} \leq \frac{1}{n^2}
\end{equation}
since $e^{\sqrt{x-1}}>x$ for any $x>1$.

So the total expected payment collected by a single player is at most:
\begin{equation}
\frac{\log(n)}{\sqrt{n}}\sqrt{x(1)} + \sqrt{x(1-\epsilon)}\leq \frac{2\log(n)}{\sqrt{n}n^{1/4}}+\frac{1}{n}\leq \frac{3\log(n)}{n^{3/4}}
\end{equation}
Leading to a total expected payment of at most $3n^{1/4} \log(n)$.

\section{The Single Bidder Setting}

Here, we describe the performance of a mechanism where there is only one bidder,
where the reserve price is set to the median of the bidder's distribution, $\kappa$.

\begin{theorem}
\label{thm:apxRatioMedianSingleBidder}
The approximation ratio of the mechanism with median reserve price $\kappa$
in a single bidder environment where the bidder has
a convex payment function $\costfun[i]$, 
and valuations are drawn from a regular distribution is $\frac{1}{2}$.
\end{theorem}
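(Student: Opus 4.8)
The plan is to sandwich the two revenues: compute the revenue of the median‑reserve mechanism exactly, upper bound the optimal revenue by $c^{-1}(R(q^*))$, and then combine these using $R(q^*)\le\kappa$ from Lemma~\ref{lem:samuel} together with the monotonicity of $c^{-1}$.

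First I would pin down the median‑reserve mechanism. With a single bidder it simply posts the perceived reserve $\kappa=v(1/2)$: the bidder is allocated the whole good exactly when $v\ge\kappa$, an event of probability $q(\kappa)=1/2$. On that event the interim allocation equals $1$ for reported values at least $\kappa$ and $0$ otherwise, so the payment formula of Theorem~\ref{thm:myersonIcIrPaymentFormulaMonotone} yields a perceived payment of $v-\int_\kappa^v 1\,\mathrm{d}z=\kappa$, hence an actual payment of $c^{-1}(\kappa)$; below $\kappa$ nothing is sold and nothing is paid. Thus $APX=\tfrac12\,c^{-1}(\kappa)$.

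Next I would upper bound $OPT$. By Lemma~\ref{lem:revAB} we may assume the optimal mechanism charges a deterministic payment $p(v)$ depending only on $v$, with monotone interim allocation $\hat x(v)\in[0,1]$. Since $c$ is convex and increasing, $c^{-1}$ is concave, so by Jensen's inequality $OPT=\Exp_v[c^{-1}(c(p(v)))]\le c^{-1}(\Exp_v[c(p(v))])$. By Theorem~\ref{thm:myersonVirVal} with $\numbidders=1$ the expected perceived payment equals $\Exp_v[\varphi(v)\hat x(v)]$; because $0\le\hat x(v)\le 1$ this is at most $\Exp_v[\max\{\varphi(v),0\}]$, and passing to the quantile $q=q(v)$ (uniform on $[0,1]$), using $\varphi(v(q))=R'(q)$ and regularity (so $R$ is concave with $R(0)=R(1)=0$, whence $R'(q)\ge 0$ precisely for $q\in[0,q^*]$), gives $\Exp_v[\max\{\varphi(v),0\}]=\int_0^{q^*}R'(q)\,\mathrm{d}q=R(q^*)$. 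Hence $OPT\le c^{-1}(R(q^*))$.

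Finally, Lemma~\ref{lem:samuel} gives $R(q^*)\le\kappa$, and since $c^{-1}$ is increasing, $OPT\le c^{-1}(R(q^*))\le c^{-1}(\kappa)=2\,APX$, i.e.\ $APX/OPT\ge\tfrac12$. For tightness I would take $F$ concentrating near a single value (for instance uniform on $[1-\epsilon,1]$) with $c$ the identity, where $OPT=R(q^*)\to 1$ while $APX=\tfrac12\kappa\to\tfrac12$, showing the constant $\tfrac12$ cannot be improved. I do not expect a genuine obstacle; the step worth double‑checking is the identification $\Exp_v[\max\{\varphi(v),0\}]=R(q^*)$ — i.e.\ that regularity forces $\varphi$ to change sign exactly once, at the monopoly quantile — and making sure the Jensen inequality for the concave $c^{-1}$ and the monotonicity of $c^{-1}$ are both applied in the right direction.
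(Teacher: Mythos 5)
Your proof is correct and follows essentially the same route as the paper's: apply Jensen's inequality (using concavity of $c^{-1}$) to bound $OPT\le c^{-1}(\Exp[c(p(v))])$, invoke Myerson's virtual-value characterization to argue $\Exp[c(p(v))]\le R(q^*)$, then use Lemma~\ref{lem:samuel} to get $R(q^*)\le\kappa$, and compare with $APX=\tfrac12 c^{-1}(\kappa)$. You simply spell out in more detail why the single-bidder expected perceived payment is at most $R(q^*)$ (via the quantile change of variable and the sign pattern of $\varphi$), a step the paper dispatches by reference to Theorem~\ref{thm:myersonVirVal}, and you add a tightness example, which the paper omits.
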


\begin{proof}
From Corollary~\ref{cor:myersonPayment}, we can describe the expected revenue
of the mechanism with reserve price $\kappa$ as 
$\Exp_{\val[i] \sim F_i} \left[ \intpayment(\val[i]) \right]
=
\Exp_{\val[i] \sim F_i} \left[ \costfun[i]^{-1} (h (\val[i])) \right]
$,
where $h^*(\val[i]) =  \costfun[i] (\intpayment[i](\val[i]))$ is the 
revenue-maximizing interim convex payment function for 
revenue-maximizing payment $\intpayment[i]^* (\val[i])$, and 
$\costfun[i]^{-1}$ is the inverse of the convex payment function.
By Jensen's inequality, the optimal revenue is upper-bounded by
\begin{align}
OPT 
\le 
\costfun[i]^{-1} \left( \Exp_{\val[i] \sim F_i} \left[ h_i^* (\val[i])\right] \right)
.\end{align}

By Lemma~\ref{lem:samuel}:, $R(q^*) \le \kappa$.  
By definition of the monopoly reserve price,
$\eta \left( 1 - F (\eta) \right) = R(q^*)$.
Therefore, $\eta \left( 1 - F (\eta) \right) \le \kappa$.
Furthermore, by Myerson's analysis of virtual values (Theorem~\ref{thm:myersonVirVal}),
we know that maximizing total expected virtual surplus maximizes 
the total expected payments,
which is equivalent to maximizing the revenue curve:
\begin{align}
OPT 
&\le 
\costfun[i]^{-1} \left( R(q^*) \right)
\\
&\le
\costfun[i]^{-1} \left( \kappa \right)
.\end{align}
A mechanism that allocates to a bidder that bids at least $\kappa$ and charges $\kappa$ 
yields revenue:
\begin{align}
APX 
&= 
\costfun[i]^{-1} (\kappa) (1 - F_i(\kappa))
\\
&=
\frac{1}{2} \costfun[i]^{-1} (\kappa)
.\end{align}
Therefore, the approximation ratio of this mechanism is
\begin{align}
\frac{APX}{OPT}
&\ge
\frac{1}{2} \costfun[i]^{-1} (\kappa) \frac{1}{\costfun[i]^{-1} \left( \kappa \right)}
\\
&=
\frac{1}{2}
.\end{align}
\end{proof}

\section{Proof of Lemma~\ref{LEM:SOLPROGC}}
\label{sec:progCproof}
\begin{proof}
The function 
$\costfun[i]^{-1} \left( \val[i] \alloc[i] (\val[i], \valvec[-i]) \right)$
is non-decreasing and concave.
To maximize the sum 
$\sum_{i \in \bidders}
\costfun[i]^{-1} \left( \val[i] \alloc[i] (\val[i], \valvec[-i]) \right)$,
we can use the equi-marginal principle given by \cite{gossen1854entwickelung},
which states that it is optimal to allocate greedily until supply is exhausted.

The derivative of the contribution of bidder $i$ is
\begin{equation}
\label{eq:pseudoWelfareDerivativeOfBidder}
\frac{\partial \left( \val[i] \alloc[i] (\val[i], \valvec[-i]) \right)^{1/d}}
{\partial \alloc[i] (\val[i], \valvec[-i])}
= \frac{1}{d} \val[i]^{1/d} \left( \alloc[i] (\val[i], \valvec[-i]) \right)^{(1-d)/d}
.\end{equation}
Equating derivatives for bidders $i$ and $j$ (as per the equi-marginal principle), we get
\begin{equation}
\frac{1}{d} \val[i]^{1/d} \left( \alloc[i] (\val[i], \valvec[-i]) \right)^{(1-d)/d}
=
\frac{1}{d} \val[j]^{1/d} \left( \alloc[j] (\val[j], \valvec[-j]) \right)^{(1-d)/d}
.\end{equation}
The common terms can be removed, and after raising the expressions to the $d / (d - 1)$ power, 
we can simplify to
\begin{equation}
\frac{ \val[i]^{1 / (d-1)} }{\alloc[i] (\val[i], \valvec[-i])}
= \frac{ \val[i]^{1 / (d-1)} }{\alloc[j] (\val[j], \valvec[-j])}
.\end{equation}
Therefore, bidder $i$'s allocation in terms of bidder $j$ is
\begin{equation}
\alloc[i] (\val[i], \valvec[-i]) = \left( \frac{\val[i]}{\val[j]} \right)^{1 / (d - 1)} \alloc[j] (\val[j], \valvec[-j])
.\end{equation}

Plugging this expression into the ex-post feasibility condition yields:
\begin{equation}
\sum_{i=1}^{\numbidders} \alloc[i] (\val[i], \valvec[-i])
= \sum_{i=1}^{\numbidders} \left( \frac{\val[i]}{\val[j]} \right)^{1 / (d - 1)} \alloc[j] (\val[j], \valvec[-j])
.\end{equation}
It is always optimal to allocate until 
$\sum_{i=1}^{\numbidders} \alloc[i] (\val[i], \valvec[-i]) = 1$
when there is at least one bidder with a positive constant,
so the ex-post feasibility constraint can be written as
\begin{equation}
\sum_{i=1}^{\numbidders} \left( \frac{\val[i]}{\val[j]} \right)^{1 / (d - 1)} \alloc[j] (\val[j], \valvec[-j])
= 1
.\end{equation}
Therefore, bidder $j$ is allocated as follows:
\begin{equation}
\alloc[j] (\val[j], \valvec[-j]) = \frac{\val[j]^{1 / (d-1)}}{\sum_{i=1}^{\numbidders} \val[i]^{1 / (d - 1)}}
.\end{equation}
\end{proof}

\section{Improved Bound with Cost Specific Reserve}\label{sec:cost-specific}

In this section we show that if we optimize the valuation reserve as a function of $d$, then we can get a much better guarantee which converges to $1$ as $d\rightarrow \infty$. Intuitively, as $d\rightarrow \infty$ then we should be allocating to almost every player and charging them a very high price. So we should be using a reserve that allocates not with probability $1/2$ but rather with probability $q$ that is much higher than $1/2$. Thus we will use a reserve of $v(q)$ for $q\in [1/2,1]$ and optimize $q$ as a function of $d$. Specifically, we will use a $q$ of $\max\left\{\frac{1}{2},1-\frac{1}{d-1}\right\}$.

We will first prove a Lemma which is an extension of the simplified prophet inequality that we used in the previous section. 
\begin{lemma}\label{lem:general-prophet}
For any regular distribution and for any $\hat{q}\geq 1/2$: 
\begin{equation}
R(\hat{q}) \geq (1-\hat{q})R(q^*)
\end{equation}
\end{lemma}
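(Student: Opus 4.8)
The plan is to derive the bound purely from concavity of the perceived revenue curve $R$ (which is exactly the regularity hypothesis), together with the boundary values $R(0) = R(1) = 0$ and the maximality $R(q^*) = \max_q R(q) \ge 0$. Geometrically, concavity says that the graph of $R$ lies above every chord, so I will lower-bound $R(\hat q)$ by the value at $\hat q$ of the appropriate chord through the point $(q^*, R(q^*))$ and one of the endpoints of $[0,1]$.

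I would split on whether $\hat q \ge q^*$. If $\hat q \in [q^*, 1]$, write $\hat q = \lambda q^* + (1-\lambda)\cdot 1$ with $\lambda = \frac{1-\hat q}{1-q^*} \in [0,1]$ (the case $q^* = 1$ being trivial, since then $R(q^*) = R(1) = 0$). Concavity gives $R(\hat q) \ge \lambda R(q^*) + (1-\lambda) R(1) = \frac{1-\hat q}{1-q^*}\, R(q^*)$, and since $1 - q^* \le 1$ this is at least $(1-\hat q) R(q^*)$, as claimed.

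If instead $\tfrac12 \le \hat q < q^*$, then $\hat q \in [0, q^*]$; writing $\hat q = (1-\mu)\cdot 0 + \mu q^*$ with $\mu = \hat q / q^* \in [0,1]$, concavity gives $R(\hat q) \ge \mu R(q^*) = \frac{\hat q}{q^*}\, R(q^*)$. Now $q^* \le 1$ gives $\frac{\hat q}{q^*} \ge \hat q$, and $\hat q \ge \tfrac12$ gives $\hat q \ge 1 - \hat q$, so again $R(\hat q) \ge (1-\hat q) R(q^*)$.

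I do not expect any real obstacle here: the statement generalizes, in one line per case, the $\hat q = \tfrac12$ estimate already used in the proof of Lemma~\ref{lem:samuel} and in Lemma~\ref{lem:pfApx}. The only place the hypothesis $\hat q \ge \tfrac12$ is actually needed is the second case, to convert $\frac{\hat q}{q^*} R(q^*)$ into $(1-\hat q) R(q^*)$; that conversion, together with checking that the convex-combination coefficients lie in $[0,1]$, are the only routine details to verify.
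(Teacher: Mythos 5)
Your proof is correct and follows essentially the same route as the paper's: split on whether $\hat q$ is to the left or right of $q^*$, lower-bound $R(\hat q)$ by the chord through $(q^*,R(q^*))$ and the appropriate endpoint $(0,0)$ or $(1,0)$, and use $q^*\le 1$ together with $\hat q\ge 1/2$ to finish. You phrase the chord bound in convex-combination language rather than as a line equation, but the underlying argument is identical.
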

\begin{proof}
First suppose that $q^*\geq \hat{q}$. Then we know that $R(q)$ is increasing until $\eta$ and concave. Hence:
\begin{equation}
R(\hat{q}) \geq \frac{R(q^*)-R(0)}{q^*} \hat{q} +R(0)\geq R(q^*)\hat{q} +R(0)(1-\hat{q})\geq R(q^*)\hat{q}\geq  R(q^*)\frac{1}{2} \geq R(q^*)(1-\hat{q})
\end{equation}

Now suppose that $q^*\leq \hat{q}$. Then $R(q)$ is decreasing in the region $(q^*,1]$ and $R(1)=0$. Thus by concavity of $R(q)$, since $\hat{q}\in (q^*,1]$:
\begin{equation}
R(\hat{q}) \geq \frac{R(q^*)-R(1)}{q^*-1}(\hat{q}-1)+R(1)= \frac{R(q^*)}{1-q^*}(1-\hat{q})\geq R(q^*)(1-\hat{q})
\end{equation}
\end{proof}

\begin{theorem}
Assume that the value distributions satisfy the Monotone Hazard Rate (MHR) condition and that the payment function is of the form $c(p) = p^d$ for $d\geq 2$. Then allocating uniformly at random to all players whose quantile is below the quantile reserve $\hat{q}=\max\{\frac{1}{2},1-\frac{1}{d-1}\}$ and charging them the ex-post truthful payment of $\left(\frac{v(\hat{q})}{Z}\right)^{1/d}$, where $Z$ is the realized number of players with quantile below $\hat{q}$, achieves revenue at least $\left(\frac{n}{n+1}\right)^{1/d}\cdot \frac{1}{2\sqrt{e}}$ for $d\in [2,3)$ and at least $\left(\frac{n}{n+1}\right)^{1/d}\cdot \frac{1}{(4e(d-2))^{1/d}}$ for $d\geq 3$ of the optimal revenue.
\end{theorem}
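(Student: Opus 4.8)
The plan is to combine the fixed-reserve revenue bound of Lemma~\ref{lem:reserveQuantileApx}, the generalized prophet inequality of Lemma~\ref{lem:general-prophet}, and the MHR upper bound on $OPT$ coming from Lemmas~\ref{lem:optUbByPseudoSurplus} and~\ref{lem:dhangwatnotai}, and then to substitute the prescribed quantile reserve $\hat q$ and verify the two claimed constants case by case.

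\textbf{Step 1 (lower bound on $APX$).} Since $\hat q=\max\{1/2,\,1-1/(d-1)\}\ge 1/2$, Lemma~\ref{lem:reserveQuantileApx} applied to this fixed quantile reserve gives $APX \ge \numbidders\bigl(v(\hat q)/(1+(\numbidders-1)\hat q)\bigr)^{1/d}\hat q$. Because $\hat q\ge 1/2$ we have $1+(\numbidders-1)\hat q=(\numbidders+1)\hat q-(2\hat q-1)\le(\numbidders+1)\hat q$, hence
\[
APX \;\ge\; \numbidders\left(\frac{v(\hat q)}{(\numbidders+1)\hat q}\right)^{1/d}\hat q
\;=\; \frac{\numbidders}{(\numbidders+1)^{1/d}}\,\bigl(v(\hat q)\hat q\bigr)^{1/d}\,\hat q^{\,1-2/d}
\;=\; \frac{\numbidders}{(\numbidders+1)^{1/d}}\,R(\hat q)^{1/d}\,\hat q^{\,1-2/d}.
\]
Since an MHR distribution is regular, Lemma~\ref{lem:general-prophet} yields $R(\hat q)\ge(1-\hat q)R(q^*)$, so
\[
APX \;\ge\; \frac{\numbidders}{(\numbidders+1)^{1/d}}\,\bigl((1-\hat q)R(q^*)\bigr)^{1/d}\,\hat q^{\,1-2/d}.
\]

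\textbf{Step 2 (upper bound on $OPT$ and the ratio).} Lemma~\ref{lem:optUbByPseudoSurplus} gives $OPT\le\numbidders^{1-1/d}\mu^{1/d}$, and Lemma~\ref{lem:dhangwatnotai} gives $\mu\le e\,R(q^*)$, so $OPT\le\numbidders^{1-1/d}\bigl(e\,R(q^*)\bigr)^{1/d}$. Dividing the two bounds, the factor $R(q^*)^{1/d}$ cancels and
\[
\frac{APX}{OPT}\;\ge\;\left(\frac{\numbidders}{\numbidders+1}\right)^{1/d}\left(\frac{1-\hat q}{e}\right)^{1/d}\hat q^{\,1-2/d}.
\]

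\textbf{Step 3 (substitute $\hat q$).} For $d\in[2,3)$ one has $1-1/(d-1)<1/2$, so $\hat q=1/2$ and the bound simplifies to $\bigl(\tfrac{\numbidders}{\numbidders+1}\bigr)^{1/d}\big/\bigl(2^{1-1/d}e^{1/d}\bigr)$; I then check $2^{1-1/d}e^{1/d}\le 2e^{1/2}$, which is equivalent to $(e/2)^{1/d}\le e^{1/2}$, i.e. $\tfrac1d(1-\ln 2)\le\tfrac12$, true for all $d\ge 2$, giving $APX/OPT\ge\bigl(\tfrac{\numbidders}{\numbidders+1}\bigr)^{1/d}\tfrac{1}{2\sqrt e}$. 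For $d\ge 3$ one has $\hat q=1-1/(d-1)\ge1/2$ and $1-\hat q=1/(d-1)$, so the bound becomes $\bigl(\tfrac{\numbidders}{\numbidders+1}\bigr)^{1/d}\bigl(e(d-1)\bigr)^{-1/d}\bigl(\tfrac{d-2}{d-1}\bigr)^{1-2/d}$; comparing with the target $\bigl(\tfrac{\numbidders}{\numbidders+1}\bigr)^{1/d}\bigl(4e(d-2)\bigr)^{-1/d}$ reduces, after taking $d$-th powers, setting $r=\tfrac{d-1}{d-2}$, and simplifying, to the scalar inequality $4^{1/(d-1)}\ge\tfrac{d-1}{d-2}$.

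\textbf{Main obstacle.} The crux is this last inequality $4^{1/(d-1)}\ge(d-1)/(d-2)$ for all real $d\ge3$, which is tight at $d=3$. I would prove it by setting $s=d-2\ge1$ and reducing to $g(s):=\tfrac{\ln 4}{s+1}-\ln(1+1/s)\ge0$: one computes $g(1)=0$ and $g'(s)=\tfrac{1-s(\ln4-1)}{s(s+1)^2}$, so $g$ increases then decreases on $[1,\infty)$, while $g(s)=\tfrac{\ln4-1}{s}+O(s^{-2})>0$ as $s\to\infty$ since $\ln4>1$; together these force $g\ge0$ throughout. Everything else is routine algebra, and the closing remark that the $d\ge3$ bound tends to $1$ as $d\to\infty$ is immediate because $(d-2)^{1/d}\to1$ and $\bigl(\tfrac{\numbidders}{\numbidders+1}\bigr)^{1/d}\to1$.
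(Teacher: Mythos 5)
Your proof is correct and follows essentially the same route as the paper: both use Lemma~\ref{lem:reserveQuantileApx} for the fixed-reserve revenue bound, Lemma~\ref{lem:general-prophet} to relate $R(\hat q)$ to $R(q^*)$, and Lemmas~\ref{lem:optUbByPseudoSurplus} and~\ref{lem:dhangwatnotai} to bound $OPT$, then substitute the prescribed $\hat q$. For $d\in[2,3)$ the paper simply invokes Theorem~\ref{thm:apxRatioMedianDgeTwo} rather than rederiving, but the underlying check is the same one you perform ($ (e/2)^{1/d}\le\sqrt{e}$). For $d\ge 3$ the paper states the step $\hat q^{1-2/d}(1-\hat q)^{1/d}=\bigl((1-\hat q)^{d-1}\tfrac{1-\hat q}{\hat q}\bigr)^{1/d}$ as an equality, which is actually only an inequality ($\ge$, valid because $\hat q\ge 1/2$); your reduction to the equivalent scalar inequality $4^{1/(d-1)}\ge(d-1)/(d-2)$ (i.e. $\bigl(\tfrac{d-2}{d-1}\bigr)^{d-1}\ge\tfrac14$) sidesteps that typo, and your calculus proof via $g(s)$ makes explicit the fact the paper glosses over by just writing ``$\ge\tfrac14$.''
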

\begin{proof}
For $d\leq 3$, the reserve that we use is the median and hence the bound from Theorem \ref{thm:apxRatioMedianDgeTwo} applies. 

So we prove the bound for the case of $d\geq 3$ and $\hat{q}=1-\frac{1}{d-1}$. By Lemma \ref{lem:optUbByPseudoSurplus}, Lemma \ref{lem:dhangwatnotai} and Lemma \ref{lem:general-prophet} we have that:
\begin{equation}
OPT \leq n^{\frac{d-1}{d}} \mu^{1/d} \leq n^{\frac{d-1}{d}} (e R(q^*))^{1/d}\leq n^{\frac{d-1}{d}} \left(e\cdot \frac{R(\hat{q})}{1-\hat{q}}\right)^{1/d} = n^{\frac{d-1}{d}} \left(e\cdot \frac{\hat{q}\cdot v(\hat{q})}{1-\hat{q}}\right)^{1/d} 
\end{equation}

On the other hand by Lemma \ref{lem:reserveQuantileApx}, the revenue of the simple quantile reserve mechanism, with quantile $\hat{q}=1-\frac{1}{d-1}\geq \frac{1}{2}$ is at least:
\begin{equation}
APX \geq  \frac{n  \cdot \hat{q}\cdot v(\hat{q})^{1/d}}{\left(1+(n-1)\cdot \hat{q}\right)^{1/d}} = \frac{n  \cdot \hat{q}\cdot v(\hat{q})^{1/d}}{\left(1-\hat{q}+n\cdot \hat{q}\right)^{1/d}}\geq 
\frac{n  \cdot \hat{q}\cdot v(\hat{q})^{1/d}}{\left(\hat{q}+n\cdot \hat{q}\right)^{1/d}}=
\frac{n}{(n+1)^{1/d}}  \frac{\hat{q}\cdot v(\hat{q})^{1/d}}{\hat{q}^{1/d}}
\end{equation}
Thus the ratio of the two mechanisms is at least:
\begin{align*}
\frac{APX}{OPT} =~& \left(\frac{n}{n+1}\right)^{1/d} \frac{1}{e^{1/d}} \hat{q}^{1-2/d}(1-\hat{q})^{1/d}=
\left(\frac{n}{n+1}\right)^{1/d} \frac{1}{e^{1/d}} \left((1-\hat{q})^{d-1}\frac{1-\hat{q}}{\hat{q}}\right)^{1/d}\\
=~& \left(\frac{n}{n+1}\right)^{1/d} \frac{1}{e^{1/d}} \left(\left(1-\frac{1}{d-1}\right)^{d-1}\frac{1}{d-2}\right)^{1/d} \\
\geq~&\left(\frac{n}{n+1}\right)^{1/d} \frac{1}{e^{1/d}} \left(\frac{1}{4}\frac{1}{d-2}\right)^{1/d} 
\end{align*}
\end{proof}

\section{Proof of Theorem \ref{THM:APX-ALL-PAY}}\label{sec:app-all-pay}

The fact that the given bidding function is an equilibrium follows from the fact that it is a monotone function of the valuation of each player. Therefore, it gives rise to the interim allocation described in the theorem. Moreover, the pair of the bid function and interim allocation satisfy Myerson's payment identity. Finally, no player wants to bid outside of the support of the bid distribution. Hence, from standard analysis that can be found in \cite{Hartline2015}, this implies that the proposed pair of a bid function and an interim allocation constitute a Bayes-Nash equilibrium. Uniqueness of this equilibrium follow from the results of \cite{Chawla2013}.

We now move on to analyzing the revenue of this equilibrium of the all-pay auction. For simplicity we will assume that the number of bidders is a multiple of $4$. The result easily extends to the general case, albeit with more complex notation. 

We first prove upper and lower bounds on the interim allocation of bidders as a function of their quantile $q_i=1-F(\val[i])$. For any quantile $q$, denote with $X_j(q)=1\{q_j \leq q\}$. Observe that: $\Exp[X_j(q)]=\Pr[q_j\leq q]=q$, since quantiles are distributed uniformly in $[0,1]$. Moreover, if we denote with $\intalloc[i](q_i)$ the interim allocation of player $i$ when he has quantile $q_i$, then:
\begin{equation}
\intalloc[i](q_i) = \frac{4}{n} \Pr\left[ \sum_{j\neq i} X_j(q_i)\leq \frac{n}{4}-1\right]
\end{equation}

Let $S_{n-1}(q_i) = \sum_{j\neq i} X_j(q_i)$. Since $S_{n-1}$ is the sum of $n-1$ independent $0/1$ random variables, each with success probability $q$, we get by the Chernoff bound that, for any $\epsilon>0$:
\begin{equation}
\Pr[ (q_i-\epsilon)(n-1)\leq S_{n-1}(q_i) \leq (q_i+\epsilon)(n-1)] \geq 1-2\exp\{-2\epsilon^2 n\}  
\end{equation}

Thus we have that for $q_i$, such that: $(q_i+\epsilon)(n-1)\leq \frac{n}{4}-1$:
\begin{align*}
\intalloc[i](q_i)=~&\frac{4}{n}\Pr[ S_{n-1}(q_i) \leq  \frac{n}{4}-1]=\frac{4}{n}\left(1-\Pr[ S_{n-1}(q_i) > \frac{n}{4}-1] \right)\\
\geq~&\frac{4}{n}\left(1-\Pr[ S_{n-1}(q_i) > (q_i+\epsilon)(n-1)] \right)\geq \frac{4}{n}\left(1-2\exp\{-2\epsilon^2 n\}\right)
\end{align*}
Re-arranging the condition on $q_i$, we get that the latter bound on the interim allocation holds for $q_i\leq \frac{1}{4} - \frac{3}{4}\frac{1}{n-1}-\epsilon$.

Similarly, we have that for $q_i$, such that: $(q_i-\epsilon)(n-1)> \frac{n}{4}-1$:
\begin{align*}
\intalloc[i](q_i)=~&\frac{4}{n}\Pr[ S_{n-1}(q_i) \leq \frac{n}{4}-1] \\
\leq~&\frac{4}{n}\Pr[ S_{n-1}(q_i) \leq (q_i-\epsilon)(n-1)] \leq \frac{8}{n}\exp\{-2\epsilon^2 n\} 
\end{align*}
Re-arranging the condition on $q_i$, we get that the latter bound on the interim allocation holds for $q_i> \frac{1}{4} - \frac{3}{4}\frac{1}{n-1}+\epsilon$.

Finally, we know that the interim allocation $\intalloc[i](q_i)$ is monotone decreasing in $q_i$.

Now we consider the perceived payment of a player as a function of his quantile, which by transforming Myerson's identity to quantile space, takes the following form:
\begin{align*}
c_i(b_i(q_i)) = \val[i](q_i) \intalloc[i](q_i) + \int_{q_i}^{1} \intalloc[i](z)\val[i]'(z)dz=\val[i](q_i) \intalloc[i](q_i) - \int_{q_i}^{1} \intalloc[i](z)|\val[i]'(z)|dz
\end{align*}
We will lower bound the perceived payment of a player with quantile $q_i\leq \frac{1}{4} - \frac{3}{4}\frac{1}{n-1}-\epsilon$:
\begin{align*}
c_i(b_i(q_i)) =~& \val[i](q_i) \intalloc[i](q_i) + \int_{q_i}^{} \intalloc[i](z)\val[i]'(z)dz\\
\geq~& \val[i](q_i)\intalloc[i](q_i) - \int_{q_i}^{\frac{1}{4} - \frac{3}{4}\frac{1}{n-1}+\epsilon} \intalloc[i](z)|\val[i]'(z)|dz-\int_{\frac{1}{4} - \frac{3}{4}\frac{1}{n-1}+\epsilon}^1 \intalloc[i](z)|\val[i]'(z)|dz\\
\geq~& \val[i](q_i) \intalloc[i](q_i)- \intalloc[i](q_i)\int_{q_i}^{\frac{1}{4} - \frac{3}{4}\frac{1}{n-1}+\epsilon}|\val[i]'(z)|dz-\frac{8}{n}\exp\{-2\epsilon^2n\}\int_{\frac{1}{4} - \frac{3}{4}\frac{1}{n-1}+\epsilon}^1 |\val[i]'(z)|dz\\
\geq~& \val[i]\left(\frac{1}{4} - \frac{3}{4}\frac{1}{n-1}+\epsilon\right) \intalloc[i](q_i)-\frac{8}{n}\exp\{-2\epsilon^2n\}\int_{\frac{1}{4} - \frac{3}{4}\frac{1}{n-1}+\epsilon}^1 |\val[i]'(z)|dz\\
\geq~& \val[i]\left(\frac{1}{4} - \frac{3}{4}\frac{1}{n-1}+\epsilon\right) \frac{4}{n}(1-2\exp\{-2\epsilon^2 n\})-\frac{8}{n}\exp\{-2\epsilon^2n\}\bar{\val}\\
\geq~& \frac{4}{n}\val[i]\left(\frac{1}{4}+\epsilon\right) - \frac{16}{n}\exp\{-2\epsilon^2 n\}\bar{\val}
\end{align*}
 
Since a player has quantile smaller than $\frac{1}{4} - \frac{3}{4}\frac{1}{n-1}-\epsilon$ with probability equal to $\frac{1}{4} - \frac{3}{4}\frac{1}{n-1}-\epsilon)$, we can lower bound the ex-ante expected payment of each bidder by:
\begin{align*}
\Exp[b_i(q_i)]\geq \left(\frac{1}{4} - \frac{3}{4}\frac{1}{n-1}-\epsilon\right)\cdot c_i^{-1}\left(\frac{4}{n}\val[i]\left(\frac{1}{4}+\epsilon\right) - \frac{16}{n}\exp\{-2\epsilon^2 n\})\bar{\val}\right)
\end{align*}
By picking $\epsilon = \sqrt{\frac{\log(16\bar{\val}/\kappa)}{2n}}$ (where $\kappa=v(1/2)$ is the median), then we get:
\begin{align*}
\Exp[b_i(q_i)]\geq \left(\frac{1}{4} - \frac{3}{4}\frac{1}{n-1}-\sqrt{\frac{\log(16\bar{\val}/\kappa)}{2n}}\right)\cdot c_i^{-1}\left(\frac{4}{n}\val[i]\left(\frac{1}{4}+\sqrt{\frac{\log(\bar{\val}/\kappa)}{2n}}\right) - \frac{1}{n}\kappa\right)
\end{align*}
Assuming that $\sqrt{\frac{\log(16\bar{\val}/\kappa)}{2n}}\leq \frac{1}{8}$, which happens when $n\geq 32 \log(16\bar{\val}/\kappa)$, then we have that:
\begin{equation}
\val[i]\left(\frac{1}{4}+\sqrt{\frac{\log(\bar{\val}/\kappa)}{2n}}\right) \geq \val[i]\left(1/2\right) = \kappa
\end{equation}
which yields the lower bound:
\begin{align*}
\Exp[b_i(q_i)]\geq \left(\frac{1}{8} - \frac{3}{4}\frac{1}{n-1}\right)\cdot c_i^{-1}\left(\frac{3}{n}\kappa\right)
\end{align*}
Further assuming that $\frac{3}{4}\frac{1}{n-1}\leq \frac{1}{16}\Leftrightarrow n\geq 13$ (which holds whenever $n\geq 32 \log(16\bar{\val}/\kappa)$), we get that:
\begin{align*}
\Exp[b_i(q_i)]\geq \frac{1}{16}\cdot c_i^{-1}\left(\frac{3}{n}\kappa\right)
\end{align*}

Combining all of the above we get that for $n\geq 32 \log(16\bar{\val}/\kappa)$, we can lower bound the revenue of the all-pay auction by:
\begin{equation}
APX = n \cdot \frac{1}{16}\cdot c_i^{-1}\left(\frac{3}{n}\kappa\right)
\end{equation}

If $c_i(x) = x^d$, then:
\begin{equation}
APX \geq  n \cdot \frac{1}{16}\cdot \left(\frac{3}{n}\kappa\right)^{1/d}
\end{equation}
On the other hand by Lemma \ref{lem:optUbByPseudoSurplus}, we have that for MHR distributions:
\begin{equation}
OPT\leq \numbidders  \left(\frac{e \kappa}{n}\right)^{1/d}
\end{equation}
Hence, we conclude the final part of the theorem that:
\begin{equation}
\frac{APX}{OPT} \geq \frac{1}{16} \left(\frac{3}{e}\right)^{1/d} \geq \frac{1}{16}
\end{equation}

\section{Proof of Lemma~\ref{LEM:BORDERSYMDISCPERCPAY}}

\begin{proof}
The proof is similar to that of Lemma~\ref{lem:borderSymContPercPay}.  
In a symmetric environment, variables for bidder $i$ will be
equivalent to variables for bidder $j$, so 
we can simplify a complete description of an optimal auction
by focusing on one representative bidder.  
The objective function makes use of Theorem~\ref{thm:myersonIcIrPaymentFormulaMonotone},
using a slightly different form of the payment formula.
For $\tau \in T$, we can describe payments as
$\intcostfun (\tau) = \sum_{t=1}^{\tau} t (\intalloc (t) - \intalloc (t-1))
= \sum_{t=1}^{\tau} t z(t)$;
that is, $z(t) = \intalloc (t) - \intalloc (t - 1)$ in the program is the difference between successive allocations, where we let $\intalloc (0) \equiv 0$.
Once a solution for $z$ is found, interim allocations can be recovered by
summing $z$: $\intalloc (\tau) = \sum_{t=1}^{\tau} z(t)$.
The total expected revenue can be found by solving for the program,
and then multiplying the program value by the total number of bidders, $\numbidders$.

We now turn to the interim feasibility constraints.
As described by \cite{Hartline2015} (Section 8.4.1), for symmetric environments, the only binding interim feasibility conditions simply say that for any interval of values $[t,\infty)$ the integral of the allocation to players in that interval, must be upper bounded by the integral of the allocation to such players by the highest value wins allocation (with random tie-breaking). Under the highest value wins auction (with random tie-breaking), the expected allocation of a player with value $t$ is:
\begin{align}
y(t) 
= 
\sum_{k=0}^{\numbidders - 1} 
\binom{\numbidders - 1}{k} \frac{1}{k + 1} 
\left( \sum_{\tau = 1}^{t - 1} f(\tau) \right)^{n - 1 - k} 
f(t)^k
,\end{align} 
so
we have the following inequality:
\begin{align}
\sum_{q = t}^{m} f(t) \sum_{\tau = 1}^{q} z(\tau)
\le
\sum_{q = t}^{m} f(q) y(q)
,&&
\forall t \in T
.\end{align}
We then proceed by changing the order of summation on the left-hand side of the inequality:
\begin{align}
\sum_{q=t}^{m} f(q) \sum_{\tau = 1}^{q} z(q)
&=
\sum_{\tau=t}^{m} z(\tau) \sum_{q = \max \{t, \tau\}}^{m} f(q)
\\
&=
\sum_{\tau=t}^{m} z(\tau) \left( 1 - \sum_{k = 1}^{\max \{ t, \tau \} - 1} f(k) \right)
\\
&=
\sum_{\tau=t}^{m} z(\tau) \left( 1 - F \left( \max \{ t, \tau \} - 1 \right) \right)
.\end{align}
\end{proof}

\end{document}